\newtheorem{theorem}{Theorem}[section]
\newtheorem{lemma}[theorem]{Lemma}
\newtheorem{remark}[theorem]{Remark}
\newtheorem{definition}[theorem]{Definition}
\newtheorem{corollary}[theorem]{Corollary}
\newtheorem{example}[theorem]{Example}
\numberwithin{equation}{section}
\numberwithin{table}{section}
\def\Z{{\mathbb Z}}\def\F{{\mathbb F}}
\def\T{{\varphi}} \def\P{{\rho}}
\def\CRT{\,\mathop{=\kern-3pt=}\limits^{\mbox{\rm\tiny CRT}}\,}
\begin{document}

\title{Isometrically Self-dual Cyclic Codes}

\insert\footins{\footnotesize
{\it Email addresses}:  yfan@mail.ccnu.edu.cn (Y. Fan)}

\author{Yun Fan\quad and\quad Liang Zhang\\
\small Dept of Mathematics,
\small  Central China Normal University, Wuhan 430079, China}

%\date{}

\maketitle

\begin{abstract}
General isometries of cyclic codes, including multipliers and translations,
are introduced; and isometrically self-dual cyclic codes are defined.
In terms of Type-I duadic splittings given by multipliers and translations,
a necessary and sufficient condition for the existence
of isometrically self-dual cyclic codes is obtained.
A program to construct isometrically self-dual cyclic codes is provided,
and illustrated by several examples. In particular,
a class of isometrically self-dual MDS cyclic codes,
which are alternant codes from a class of generalized Reed-Solomon codes,
is presented.

\medskip{\it Keywords:}
cyclic code, isometry, isometrically self-dual,
MDS cyclic codes.

\medskip{\it MSC2010:} 12E20, 94B60.
\end{abstract}

\section{Introduction}

Self-dual codes and cyclic codes %over finite fields
are two important and long-time topics
which are interesting in both theoretical perspective
and technological practice.

Let $\F_q$ be a finite field with $q$ elenments,
where $q$ is a power of a prime.
Cyclic codes of length $n$ over $\F_q$
can be described as ideals of the quotient algebra
$\F_q[X]/\langle X^n-1\rangle$ of the polynomial
algebra $\F_q[X]$ with respect to the ideal $\langle X^n-1\rangle$
generated by $X^n-1$.
If the greatest common divisor $\gcd(q,n)=1$,
then $\F_q[X]/\langle X^n-1\rangle$ is a semisimple algebra.
Semisimple cyclic codes are studied and applied extensively.
However, it is a pity that there are no (euclidean)
self-dual semisimple cyclic codes.
An alternative research is the study on
{\em duadic} cyclic codes, %and {\em extended self-dual} cyclic codes,
which are initiated by Leon, Masley and Pless \cite{LMP},
and developed extensively from various directions, e.g., see 
\cite{P, S, BP, R, WZ, DLX, DP, W, LX, MW, SBDR, AKS, HK, FZ12, JLS, FZ16-a}.

In the semisimple case, the irreducible factors in $\F_q[X]$ of ${X^n-1}$
correspond one-to-one to the {\em $q$-cyclotomic cosets}
({\em $q$-cosets} in short) of the residue ring $\Z_n$
of the integer ring $\Z$ modulo $n$.
By $\Z_n^*$ we denote the subset of the elements $s\in\Z_n$
such that $\gcd(s,n)=1$.
For any $s\in\Z_n^*$,
we have a permutation $\mu_s$ of $\Z_n$
such that (see \cite[p.138]{HP}):
\begin{equation}\label{multiplier}
 \mu_s(i)=si\!\!\pmod n,~~~~~~~ \forall~i\in\Z_n.
\end{equation}
The permutation $\mu_s$ is called a {\em multiplier}.
Correspondingly,
we have an isomorphism $\hat\mu_s$ of the algebra
$\F_q[X]/\langle X^n-1\rangle$ such that (see \cite[Eqn (4.4)]{HP}):
\begin{equation}\label{isometry}
 \hat\mu_s\big(a(X)\big)=a(X^{s})\!\!\pmod{X^n-1}, ~~~~~
 \forall~ a(X)\in \F_q[X]/\langle X^n-1\rangle.
\end{equation}
The isomorphism $\hat\mu_s$ preserves the Hamming weight obviously
(the isomorphisms which preserve Hamming weight
are called {\em isometries}).
If there is a subset $P\subseteq\Z_n$ which is a union of some $q$-cosets
such that $\Z_n=\{0\}\bigcup P\bigcup\mu_s(P)$ is a partition,
then $P$ and $\mu_s(P)$ are called an even-like duadic splitting of $\Z_n$.
In that case, $\F_q[X]/\langle X^n-1\rangle=C+\hat\mu_s(C)$
and $C\bigcap\hat\mu_s(C)$ is the cyclic code generated by
the vector $(1,\cdots,1)$, where $C$ is the cyclic code with
defining set~$P$.
Please refer to \cite[Ch. 6]{HP} for details.

Constacyclic codes are a natural generalization of cyclic codes,
they are ideals of the quotient algebra
$\F_q[X]/\langle X^n-\lambda\rangle$,
where $\lambda\in\F_q^*$ is a non-zero element.
Semisimple constacyclic codes are still the most important case.
In the semisimple case, \cite{DL} showed that
self-dual negacyclic codes (i.e., $\lambda=-1$) exist.
Blackford~\cite{B08} proved that semisimple self-dual constacyclic codes
do not exist except for the negacyclic case; and he showed
conditions that the self-dual negacyclic codes exist.
A more concise condition for the existence
of self-dual negacyclic codes appeared in \cite[Corollary 21]{CDFL}.

An innovation to develop further the theory is
the concept of {\em iso-dual constacyclic codes}
introduced in \cite{B13}.
Assume that $\lambda$ is a primitive $r$-th root of unity
(of course, $r$ divides $q-1$). Similarly to the cyclic codes,
the set of roots of $X^n-\lambda$ corresponds
to a subset of $\Z_{nr}$, denoted by $1+r\Z_{nr}$,
which consists of the elements $i\in\Z_{nr}$ such that $i\equiv 1\!\pmod r$.
The irreducible factors of $X^n-\lambda$ correspond bijectively
to the $q$-cosets on the set $1+r\Z_{nr}$. For $s\in\Z_{nr}^*$,
Blackford \cite{B13} defined a multiplier $\mu_s$ on $1+r\Z_{nr}$
similarly to Eqn~\eqref{multiplier},
and defined an isometry $\T_s$ of $\F_q[X]/\langle X^n-\lambda\rangle$
similarly to Eqn \eqref{isometry}. Then,
a constacyclic code $C$ is said to be {\em iso-dual}
if there is an isometry which maps $C$ onto its dual code $C^\bot$.
If there is a subset $P\subseteq(1+r\Z_{nr})$
which is a union of some $q$-cosets such that
$1+r\Z_{nr}= P\bigcup\mu_s(P)$ is a partition,
then $P$ and $\mu_s(P)$ are called a {\em {Type-I} duadic splitting}
of $1+r\Z_{nr}$ given by $\mu_s$;
in that case, $\F_q[X]/\langle X^n-1\rangle=C\oplus\T_s(C)$
is a direct sum, where $C$ is the constacyclic code with defining set~$P$.
Blackford \cite{B13} proved that iso-dual constacyclic codes
correspond one to one to the Type-I duadic splittings given by multipliers;
and showed conditions for the existence of such splittings
(an issue was addressed in \cite{CD}).
A necessary and sufficient condition for the existence of the
Type-I duadic splittings given by multipliers
was presented in \cite[Corollary 19]{CDFL}.
With these notations, however, the iso-dual cyclic codes do not exist either,
because any multiplier $\mu_s$ on $\Z_n$ leaves $0$ invariant,
hence the Type-I duadic splittings of $\Z_n$
given by multipliers do not exist;
see also Remark \ref{nu} (ii) below.

It is an interesting question: whether %or not
we can find a generalized self-duality
which occurs in semisimple cyclic codes?
The key idea is that, in essence, an isometry of cyclic codes
is an algebra isomorphism preserving Hamming weight;
though multipliers are isometries,
there might be isometries outside of the multipliers,
and those isometries might give isometrically self-dual cyclic codes.
We should investigate the general isometries of cyclic codes
to discover the isometries besides multipliers,
then we would answer the question positively.

In Section 2, we start with an investigation of
the general isometries of semisimple cyclic codes of length $n$,
and define the {\em isometrically self-dual cyclic codes},
abbreviated as {\em iso-self-dual cyclic codes}.
The permutations of $\Z_n$ corresponding the general isometries
include not only multipliers, and translations as well.
Then, we transform the study of iso-self-dual semisimple cyclic codes
into the study of Type-I duadic splittings of $\Z_n$
given by the multipliers and translations.

In Section 3, we present necessary and sufficient conditions
for the existence of Type-I duadic splittings of $\Z_n$
given by multipliers and translations. Meanwhile,
a method of construction of such splittings is also provided.

In Section 4, we exhibit a program to construct
iso-self-dual semisimple cyclic codes, and illustrate it with examples.
If the length $n$ is even and $n/2$ is odd, then for
any finite field $\F_q$ with $q$ coprime to $n$
the iso-self-dual cyclic codes of length~$n$ over $\F_q$ exist and the
construction of them is interesting.
In particular, a class of iso-self-dual MDS cyclic codes,
which are alternant codes from a class of generalized Reed-Solomon codes,
is obtained.

\section{Isometrically self-dual cyclic codes}

Let ${\F}_q$ be the finite field
with cardinality $|{\F}_q|=q$, where $q$ is a prime power,
let $\F_q^*$ be the set of non-zero elements of $\F_q$.
Let $n$ be a positive integer.

By $\Z_{ n}$ we denote the residue ring of the integer ring ${\Z}$
modulo $n$, and by $\Z_{n}^*$ we denote the multiplicative group
consisting of units of $\Z_{n}$.

By $R_n={\F}_q[X]/\langle X^n-1\rangle$ we denote
the quotient algebra of the polynomial algebra ${\F}_q[X]$
modulo the ideal $\langle X^n-1\rangle$ generated by $X^n-1$.
Any element of $R_{n}$ has a unique representative of degree  $<n$:
$a(X)=a_0+a_1X+\cdots+a_{n-1}X^{n-1}$, which is identified with
the word ${\bf a}=(a_0,a_1,\cdots,a_{n-1})\in {\F}_q^n$.
By ${\rm w}(a(X))$ we denote the Hamming weight of the word ${\bf a}$.
Any ideal $C$ of $R_n$ is called a {\em cyclic code}
of length~$n$ over ${\F}_q$.

In the following we {\em always} assume that $\gcd(q,n)=1$,
i.e., the cyclic codes considered in the following are all semisimple.
We always assume that $\theta$ is a primitive $n$-th root of unity
(in a suitable extension of $\F_q$). Then
\begin{equation}\label{in F_{q^d}}
 X^n-1
 =\prod\limits_{i\in {\Z}_{n}} (X-\theta^i).
\end{equation}
In this way, roots of $X^n-1$ correspond bijectively to
elements of $\Z_{n}$.

\begin{definition}\label{d:iso}
\rm
A transformation $\T:R_n\to R_n$ is called an {\em isometry}
of $R_n$ if $\T$ is an algebra automorphism of $R_n$ and
preserves the Hamming weight of $R_n$
(i.e., ${\rm w}(\T(a(X)))={\rm w}(a(X))$ for all $a(X)\in R_n$).
\end{definition}

General isometries between constacyclic codes were introduced
in \cite{CFLL} to classify constacyclic codes,
with less attention paid to cyclic codes.
We need precise characterizations of isometries of cyclic codes
as follows.

\begin{lemma}\label{l:iso}
A transformation
$\T:R_n\to R_n$ is an isometry if and only if
there is a positive integer $s\in{\Z}_n^*$ and an integer $t\in{\Z}_n$
satisfying that $qt\equiv t\!\pmod n$ such that
\begin{equation}\label{iso}
 \T\big(a(X)\big)=a(\theta^tX^s) %\theta^{ti} X^{si}
 \!\!\pmod{X^n-1},~~~~~ \forall~a(X)=\sum_{i=0}^{n-1}a_iX^i\in R_n.
\end{equation}
\end{lemma}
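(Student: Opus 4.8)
The plan is to prove the two implications separately, with the substitution map $X\mapsto\theta^tX^s$ as the central object.

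\emph{Sufficiency.} Suppose $s\in\Z_n^*$ and $t\in\Z_n$ satisfy $qt\equiv t\pmod n$, and let $\T$ be defined by \eqref{iso}. First I would check that $\T$ is a well-defined $\F_q$-algebra homomorphism $R_n\to R_n$: the substitution $X\mapsto\theta^tX^s$ extends to a homomorphism $\F_q[X]\to R_n$, and since $(\theta^tX^s)^n=(\theta^n)^t(X^n)^s=1$ in $R_n$ it kills $X^n-1$ and factors through $R_n$. The delicate point is that the image has coefficients in $\F_q$: writing $\T(a(X))=\sum_j b_jX^j$, one finds $b_j=a_{s^{-1}j}\,\theta^{ts^{-1}j}$ because $i\mapsto si$ is a bijection of $\Z_n$; applying the Frobenius $x\mapsto x^q$ and using $a_i\in\F_q$ together with $qt\equiv t\pmod n$ shows $b_j^q=b_j$, so $b_j\in\F_q$. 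This same formula for $b_j$ gives $b_j\neq0\iff a_{s^{-1}j}\neq0$, so $\T$ preserves Hamming weight. Finally $\T$ is bijective: the map attached to the pair $(s^{-1}\bmod n,\ -ts^{-1}\bmod n)$ — which again satisfies the congruence, since $q(-ts^{-1})\equiv-ts^{-1}\pmod n$ — is a two-sided inverse of $\T$ by a direct substitution check. Hence $\T$ is an isometry.

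\emph{Necessity.} Let $\T$ be an isometry. As an $\F_q$-algebra automorphism $\T$ is $\F_q$-linear, so $\T(X)$ is nonzero and has Hamming weight $\mathrm{w}(X)=1$; thus $\T(X)=cX^s$ for some $c\in\F_q^*$ and some $s$ with $0\le s<n$. From $X^n=1$ we get $c^nX^{sn}=\T(X)^n=\T(X^n)=1$, so $c^n=1$; hence $c$ is an $n$-th root of unity, i.e.\ $c=\theta^t$ for some integer $t$, and $c\in\F_q$ forces $c^q=c$, i.e.\ $\theta^{qt}=\theta^t$, i.e.\ $qt\equiv t\pmod n$. Since $\T$ is surjective and $R_n$ is generated as an $\F_q$-algebra by $X$, it is generated by $\T(X)=cX^s$; but $\F_q[cX^s]$ is spanned by the powers $c^kX^{sk}$, i.e.\ by $\{X^j:j\in\langle s\rangle\}$, whose dimension is $n/\gcd(s,n)$, so $\gcd(s,n)=1$, i.e.\ $s\in\Z_n^*$. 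It remains to identify $\T$ with \eqref{iso}: for $a(X)=\sum_i a_iX^i$, multiplicativity and $\F_q$-linearity give $\T(a(X))=\sum_i a_i\T(X)^i=\sum_i a_i(\theta^tX^s)^i=a(\theta^tX^s)\pmod{X^n-1}$.

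The substitution identities and the verification of the inverse are routine. The one step that genuinely uses a hypothesis in a non-formal way is showing, in the sufficiency direction, that $X\mapsto\theta^tX^s$ really sends $\F_q^n$ into $\F_q^n$: this is exactly where $qt\equiv t\pmod n$ enters, via Frobenius-invariance, and it is the same congruence that drops out of the requirement $c=\theta^t\in\F_q$ in the necessity direction. (The degenerate case $n=1$ is trivial and may be checked separately.)
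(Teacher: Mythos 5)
Your proof is correct and follows essentially the same route as the paper's: necessity via the weight-one image $\T(X)=cX^s$, the relation $c^n=1$ forcing $c=\theta^t$ with $qt\equiv t\pmod n$ by Frobenius/Galois, and a dimension count forcing $\gcd(s,n)=1$; sufficiency by observing that the substitution acts as a monomial transformation on coefficients (your explicit formula $b_j=a_{s^{-1}j}\theta^{ts^{-1}j}$ is exactly the paper's monomial matrix $D_tP_s$ in coordinates). The only cosmetic difference is that you exhibit an explicit inverse $\T_{(s^{-1},-ts^{-1})}$ where the paper invokes invertibility of the monomial matrix.
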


\begin{proof}
Assume that $\T:R_n\to R_n$ is an isometry.
Since $\T$ preserves the Hamming weight,
there is an element $b\in{\F}_q^*$ and a positive integer 
$s\in{\Z}_n$ such that $\T(X)=b X^{s}$. 
Because $\T$ is an algebra automorphism, the $n$ elements:
\begin{equation}\label{basis}
 \T(X^i)=(bX^s)^i= b^iX^{si}\!\!\pmod{X^n-1}, ~~~~~
 i=0,1,\cdots,n-1,
\end{equation}
are a basis of $R_n$.
If the greatest common divisor $d=\gcd(s,n)>1$, then
the remainder of $si$ modulo $n$ is a multiple of $d$;
hence the $n$ elements in Eqn \eqref{basis} are not a basis of $R_n$,
which is not the case. So
the integer $s$ is coprime to $n$, i.e., $s\in\Z_n^*$.
Since $X^n=1$ in $R_n$, we have
$$
1=\T(X^n)=\T(X)^n=b^nX^{sn}=b^n;
$$
that is, $b=\theta^t$ for an integer $t\in\Z_n$.
By the Galois theory, $\theta^t\in\F_q$ if and only if
$(\theta^t)^q=\theta^t$ which is equivalent to that
$qt\equiv t\!\pmod n$. Then Eqn~\eqref{iso}
holds obviously.

Conversely, assume that $s\in{\Z}_n^*$, $t\in{\Z}_n$
satisfying that $qt\equiv t\!\pmod n$, and Eqn \eqref{iso} holds.
Note that $\theta^t\in\F_q$ because $qt\equiv t\!\pmod n$.
For any $a(X),b(X)\in R_n$, in $R_n$ we have the following computation:
$$
\T\big(a(X)b(X)\big)=
a(\theta^{t}X^{s})b(\theta^{t}X^{s})
=\T\big(a(X)\big)\T\big(b(X)\big)\!\pmod{X^n-1}.
$$
Similarly, we can check that
$\T\big(a(X)+b(X)\big)=\T\big(a(X)\big)+\T\big(b(X)\big)$.
Thus, $\T$ is an algebra endomorphism of $R_n$.
For $i=0,1,\cdots,n-1$, there are integers $h_i$
and $r_i$ with $0\le r_i\le n-1$ such that $si=nh_i+r_i$.
Then
\begin{equation*}%\label{T basis}
 \T(X^i)=\theta^{ti}X^{r_i}, ~~~~~ i=0,1,\cdots,n-1.
\end{equation*}
Since $s$ is coprime to $n$, it is easy to see
that $i\mapsto r_i$ is a permutation of $\{0,1,\cdots,n-1\}$.
Let $P_s$ be the corresponding permutation matrix.
By $D_t$ %={\rm diag}(1,\theta^{t},\cdots,\theta^{t(n-1)})$
we denote the diagonal matrix with $1,\theta^{t},\cdots,\theta^{t(n-1)}$
on the diagonal positions.
Then $\T(X^i)=\theta^{ti}X^{r_i}$ implies that:
\begin{itemize}
\item {\it With respect to the basis $1,X,\cdots,X^{n-1}$ of $R_n$,
 the matrix of $\T$ is the monomial matrix $D_tP_s$.}
\end{itemize}
%So, the $n$ elements in Eqn \eqref{T basis} are a basis of $R_n$.
Hence $\T$ is an algebra automorphism
and preserves the Hamming weight.
\end{proof}

\begin{remark}\label{r:iso}\rm
For $s\in\Z_n^*$ and $t\in\Z_n$ with $qt\equiv t\!\pmod n$,
we have a positive integer
$s^{-1}\in\Z_n^*$ such that $ss^{-1}\equiv 1\!\pmod n$;
and for $-t$ we still have $q(-t)\equiv -t\!\pmod n$.
By $\T_{s,t}$ we denote the isometry of $R_n$ defined by
\begin{equation}\label{e:iso}
\T_{s,t}:~~ R_n \longrightarrow\, R_n,~~
 a(X) \longrightarrow\, a(\theta^{-t}X^{s^{-1}})\!\!\pmod{X^n-1}.
\end{equation}
That is, $\T_{s,t}\big(a(X)\big)=a(\theta^{-t}X^{s^{-1}})\!\pmod{X^n-1}$,
it is the unique polynomial of degree $<n$ congruent to
$a(\theta^{-t}X^{s^{-1}})$ modulo $X^n-1$.
We illustrate two specific cases.

(i)~ If $s=1$. then we take $s^{-1}=1$ and,
for any $a(X)=\sum_{i=0}^{n-1}a_iX^i\in R_n$, since
$\deg a(\theta^{-t}X)=\deg a(X)<n$, we have
\begin{equation}\label{e:trans iso}
\T_{1,t}\big(a(X)\big)=a(\theta^{-t}X)
%(a_0+a_1X+\cdots+a_{n-1}X^{n-1})
=a_0+\theta^{-t} a_1X+\cdots+\theta^{-t(n-1)}a_{n-1}X^{n-1}.
\end{equation}

(ii)~ If $s\!=\!-1$ and $t\!=\!0$, then we take $s^{-1}\!=\!n-1$ as
$(-1)(n-1)\!\equiv\! 1\!\pmod n$;
since $X^n\equiv 1\!\pmod{X^n-1}$,
by Eqn \eqref{e:iso}, for $a(X)=\sum_{i=0}^{n-1}a_iX^i$ we have
$$\T_{-1,0}\big(a(X)\big)
 \equiv X^n\sum_{i=0}^{n-1}a_iX^{(n-1)i} %\!\pmod{X^n-1}
\equiv\sum_{i=0}^{n-1}a_iX^{n-i}\pmod{X^n-1} ;$$
i.e.,
\begin{equation}\label{e:-1 iso}
\T_{-1,0}\big(a(X)\big)=a_0+a_{n-1}X+\cdots+a_1X^{n-1}.
\end{equation}
\end{remark}\medskip

As usual, by $C^\bot$ we denote the
(Euclidean) dual code of any code $C$.
It is easy to check that, for any cyclic code $C\subseteq R_n$,
its dual code $C^\bot\subseteq R_n$ is still a cyclic  code.

\begin{definition}\rm
For a cyclic code $C\subseteq R_n$ over $\F_q$ of length $n$,
if there is an isometry $\T_{s,t}$ of $R_n$ such that
$\T_{s,t}(C)=C^\bot$, then we say that
$C$ is an {\em isometrically self-dual cyclic code},
or, an {\em iso-self-dual cyclic code} for short.
\end{definition}

Thus, an iso-self-dual cyclic code $C$ has the same
algebraic structure as its dual code $C^\bot$; specifically,
$h(X)$  is a check polynomial of $C$ if and only if
 $\T_{s,t}\big(h(X)\big)$ is a check polynomial of $\T_{s,t}(C)=C^\bot$.
Meanwhile, an iso-self-dual cyclic code $C$
has the same weight structure as $C^\bot$;
specifically, $C$ and $C^\bot$ have the same
weight distributions; in other words, $C$ is a {\em formally self-dual}
cyclic code, see \cite[page 378]{HP}.

\medskip
In the rest of this section we translate the study of iso-self-dual
cyclic codes into a study of corresponding permutations
on the index set $\Z_n$.

For any pair $s,t$ where $s\in\Z_n^*$ and $t\in\Z_n$ satisfying
that $qt\equiv t\!\pmod n$, it is easy to see that the map
\begin{equation}\label{q-permutation}
\P_{s,t}:~~ \Z_n \longrightarrow \Z_n,~~
 i \longrightarrow s(i+t)\!\pmod n,
\end{equation}
is a permutation of $\Z_n$, which we call a {\em $q$-permutation of $\Z_n$}.

We list some related known facts in the following remark.

\begin{remark}\label{r:known cyclic}\rm
(i)~
Note that, if $t=0$, then we denote $\mu_s=\P_{s,0}$,
which is called a {\em multiplier} on $\Z_n$ in literature.
On the other hand, if $s=1$, then
we denote $\tau_t=\P_{1,t}$.
Since $\tau_t(i)=i+t\!\pmod n$ for $i\in\Z_n$
and $t$ should satisfy that ``$qt\equiv t\!\pmod n$'' where $q$ is involved,
we call $\tau_t$ a {\em $q$-translation} on $\Z_n$.
Obviously, $\P_{s,t}=\mu_s\tau_t$.

(ii)~
Since $q\in\Z_n^*$, we have a special multiplier $\mu_q$,
and the $\mu_q$-orbits
(i.e., the orbits on $\Z_n$ of the permutation group $\langle\mu_q\rangle$
generated by $\mu_q$)
are usually named {\em $q$-cyclotomic cosets}.
In the following, we call $\mu_q$-orbits on $\Z_n$ by {\em $q$-cosets},
and use $\Z_n/\mu_q$ to stand for the quotient sets
consisting of all $q$-cosets on $\Z_n$.

(iii)~ For any $q$-coset $Q\in\Z_n/\mu_q$,
the polynomial $f_Q(X)=\prod_{i\in Q}(X-\theta^i)$ is
irreducible in $\F_q[X]$.
We have the monic irreducible decomposition:
$X^n-1=\prod_{Q\in\Z_n/\mu_q}f_Q(X)$.
Obviously, a subset $P\subseteq \Z_n$ is $\mu_q$-invariant
(i.e., $\mu_q(P)=P$) if and only if
$P=\bigcup_{j=1}^k Q_j$ is a union of some $q$-cosets
$Q_1,\cdots,Q_k$. If it is the case, then the polynomial
$f_P(X)=\prod_{i\in P}(X-\theta^i)=\prod_{j=1}^k f_Q(X)$
is a monic divisor of $X^n-1$. In this way,
the monic divisors of $X^n-1$ correspond one to one to the
$\mu_q$-invariant subsets of $\Z_n$.

(iv)~ Each $\mu_q$-invariant subset $P$ of $\Z_n$
corresponds to exactly one cyclic code (i.e., ideal) of $R_n$, denoted by $C_P$,
such that the polynomial $f_P(X)$ is a check polynomial of $C_P$;
both the degree $\deg f_P(X)$ and the dimension $\dim C_P$
equal to the cardinality $|P|$ of $P$.
The correspondence $P\mapsto C_P$ is a lattice isomorphism
between the lattice of $\mu_q$-invariant subsets of $\Z_n$
and the lattice of cyclic codes of $R_n$.
\end{remark}

\begin{lemma}\label{l:T Q}
Let $s\in\Z^*_n$ and $t\in\Z_n$ satisfying that $qt\equiv t\!\pmod n$,
and $s^{-1}$ be a positive integer such that $ss^{-1}\equiv 1\!\pmod{n}$.
Then for $f_Q(X)$, where $Q\in\Z_n/\mu_q$ is a $q$-coset,
we have an $\F_q$-polynomial $u_Q(X)$ which is coprime to $X^n-1$
and has leading coefficient $\theta^{-t|Q|}$ such that
$$
  f_Q(\theta^{-t}X^{s^{-1}}) =u_Q(X) f_{\P_{s,t}(Q)}(X) .
$$
\end{lemma}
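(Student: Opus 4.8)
The plan is to prove the identity by a direct computation with the polynomial
$g(X):=f_Q(\theta^{-t}X^{s^{-1}})=\prod_{i\in Q}\big(\theta^{-t}X^{s^{-1}}-\theta^{i}\big)\in\F_q[X]$
(it has coefficients in $\F_q$ because $\theta^{-t}\in\F_q$), together with a root-counting argument. First I would record the bookkeeping facts about $\P_{s,t}(Q)$: since $\P_{s,t}=\mu_s\tau_t$ and $qt\equiv t\!\pmod n$, a one-line check gives $\P_{s,t}\mu_q=\mu_q\P_{s,t}$ on $\Z_n$, so $\P_{s,t}$ carries $\mu_q$-orbits to $\mu_q$-orbits; hence $\P_{s,t}(Q)$ is again a $q$-coset with $|\P_{s,t}(Q)|=|Q|$ and, by Remark \ref{r:known cyclic}(iii), $f_{\P_{s,t}(Q)}(X)$ is monic irreducible of degree $|Q|$. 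Expanding the product for $g(X)$ shows $\deg g=|Q|\,s^{-1}$ and that its leading coefficient is $\prod_{i\in Q}\theta^{-t}=\theta^{-t|Q|}$.

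Next I would pin down which $n$-th roots of unity are roots of $g$. For $k\in\Z_n$ one has $g(\theta^{k})=f_Q\big(\theta^{\,ks^{-1}-t}\big)$, which by $f_Q(Y)=\prod_{i\in Q}(Y-\theta^{i})$ vanishes exactly when $ks^{-1}-t\equiv i\!\pmod n$ for some $i\in Q$, i.e.\ exactly when $k\equiv s(i+t)\!\pmod n$ for some $i\in Q$, i.e.\ exactly when $k\in\P_{s,t}(Q)$. In particular, for each $k\in\P_{s,t}(Q)$ the root $\theta^{k}$ of the monic irreducible polynomial $f_{\P_{s,t}(Q)}\in\F_q[X]$ is a common root of $f_{\P_{s,t}(Q)}$ and $g$, so $f_{\P_{s,t}(Q)}\mid g$ in $\F_q[X]$. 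Setting $u_Q(X):=g(X)/f_{\P_{s,t}(Q)}(X)\in\F_q[X]$ and comparing leading coefficients (the denominator being monic) gives $\mathrm{lc}(u_Q)=\theta^{-t|Q|}$, which is the asserted identity.

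The step I expect to be the main obstacle is the coprimality $\gcd\big(u_Q(X),X^n-1\big)=1$, since it is sensitive to the integer representative chosen for $s^{-1}$: if $p:=\mathrm{char}\,\F_q$ divided $s^{-1}$, then $X^{s^{-1}}-\theta^{i+t}$ would be inseparable and $g$ would acquire repeated $n$-th roots of unity as roots, making $u_Q$ non-coprime to $X^n-1$. Because $\gcd(p,n)=1$, I would choose the positive integer $s^{-1}$ coprime to $p$ (this changes neither $\T_{s,t}$ nor $\P_{s,t}$, which depend only on $s^{-1}\bmod n$ and on $s,t$ respectively). Then each factor $\theta^{-t}X^{s^{-1}}-\theta^{i}=\theta^{-t}(X^{s^{-1}}-\theta^{i+t})$ is separable — its derivative is the nonzero monomial $\theta^{-t}s^{-1}X^{s^{-1}-1}$, which is coprime to it — and a given $\theta^{k}$ with $k\in\P_{s,t}(Q)$ is a root of exactly one of the $|Q|$ factors, namely the one with $i\equiv ks^{-1}-t\!\pmod n$, and a simple root of it; hence $\theta^{k}$ is a simple root of $g$. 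Since $\theta^{k}$ is also a simple root of $f_{\P_{s,t}(Q)}$ (which has distinct roots, as $\gcd(n,p)=1$), its multiplicity in $u_Q=g/f_{\P_{s,t}(Q)}$ is $1-1=0$, so $\theta^{k}$ is not a root of $u_Q$. As the previous paragraph shows that these $\theta^{k}$ ($k\in\P_{s,t}(Q)$) are the only $n$-th roots of unity among the roots of $g$, and every root of $u_Q$ is a root of $g$, the polynomial $u_Q$ shares no root with $X^n-1=\prod_{i\in\Z_n}(X-\theta^{i})$; that is, $\gcd(u_Q,X^n-1)=1$, which completes the proof.
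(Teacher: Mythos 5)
Your proof is correct and, at its core, takes the same route as the paper's: both expand $f_Q(\theta^{-t}X^{s^{-1}})$ into the factors $\theta^{-t}\bigl(X^{s^{-1}}-\theta^{i+t}\bigr)$, identify the $n$-th roots of unity among its roots as exactly the $\theta^k$ with $k\in\P_{s,t}(Q)$, and peel off $f_{\P_{s,t}(Q)}(X)$ — you via irreducibility and a divisibility/multiplicity count, the paper by writing the cofactor explicitly as $\prod_{i\in Q}\frac{X^{s^{-1}}-(\theta^{(i+t)s})^{s^{-1}}}{X-\theta^{(i+t)s}}$ and checking it has no $n$-th root of unity as a root. The one substantive difference is the point you flag as the main obstacle, and you are right to flag it: if $p=\mathrm{char}\,\F_q$ divides the chosen integer representative $s^{-1}$, say $s^{-1}=p^e m$ with $p\nmid m$, then $X^{s^{-1}}-a^{s^{-1}}=(X^{m}-a^{m})^{p^e}$ is inseparable, $\theta^k$ becomes a multiple root of $f_Q(\theta^{-t}X^{s^{-1}})$, and the coprimality of $u_Q$ with $X^n-1$ genuinely fails. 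The paper's assertion that $\theta^{(i+t)s}$ is not a root of the displayed quotient silently uses $p\nmid s^{-1}$, since that quotient evaluates to $s^{-1}\,(\theta^{(i+t)s})^{s^{-1}-1}$ at $X=\theta^{(i+t)s}$. Your fix — choose the positive representative $s^{-1}$ coprime to $p$, which is possible because $\gcd(p,n)=1$ and harmless because $\T_{s,t}$ and $\P_{s,t}$ depend only on $s^{-1}\bmod n$ — is exactly the right one and makes the lemma's conclusion hold as stated; it is a genuine (if minor) improvement in rigor over the paper's argument.
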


\begin{proof}
Since $qt\equiv t\!\pmod n$, we see that $\theta^{-t}\in\F_q$ and
$$\P_{s,t}(Q)=s(Q+t)=\{s(i+t)\!\pmod n \mid i\in Q\}$$
is still a $q$-coset, and the cardinalities $\big|Q\big|=\big|\P_{s,t}(Q)\big|$.
Then:
\begin{eqnarray*}
 f_Q(\theta^{-t}X^{s^{-1}})
&=& \prod_{i\in Q}(\theta^{-t}X^{s^{-1}}-\theta^i)
   =\prod_{i\in Q}\theta^{-t}\big(X^{s^{-1}}-(\theta^{(i+t)s})^{s^{-1}}\big)\\
&=&\theta^{-t|Q|}\prod_{i\in Q}
        \frac{X^{s^{-1}}-(\theta^{(i+t)s})^{s^{-1}}}{X-\theta^{(i+t)s}}
        \cdot\prod_{i\in Q}(X-\theta^{(i+t)s}).
\end{eqnarray*}
Since the map $Q\to \P_{s,t}(Q)$, $i\mapsto s(i+t)\!\pmod n$, is a bijection, we see that
$$
\prod_{i\in Q}(X-\theta^{(i+t)s})
=\prod_{i\in\P_{s,t}(Q)}(X-\theta^{i})
=f_{\P_{s,t}(Q)}(X) .
$$
It remains to show that  $\prod_{i\in Q}
  \frac{X^{s^{-1}}-(\theta^{(i+t)s})^{s^{-1}}}{X-\theta^{(i+t)s}}$
is coprime to $X^n-1$.
To do it, it is enough to show that any root $\theta^j$ of $X^n-1$
is not a root of the polynomial
$\frac{X^{s^{-1}}-(\theta^{(i+t)s})^{s^{-1}}}{X-\theta^{(i+t)s}}$
for any $i\in Q$.
If $j{\equiv}(i+t)s\!\pmod n$,
then $\theta^j=\theta^{(i+t)s}$ which is not a root of the polynomial
$\frac{X^{s^{-1}}-(\theta^{(i+t)s})^{s^{-1}}}{X-\theta^{(i+t)s}}$.
Otherwise, $j{\not\equiv}(i+t)s\!\pmod n$
hence $\theta^j\ne \theta^{(i+t)s}$; since $s^{-1}$ is coprime to $n$,
$(\theta^j)^{s^{-1}}-(\theta^{(i+t)s})^{s^{-1}}\ne 0$; i.e.,
$\theta^j$ is not a root of the polynomial
$\frac{X^{s^{-1}}-(\theta^{(i+t)s})^{s^{-1}}}{X-\theta^{(i+t)s}}$.
\end{proof}

\begin{corollary}\label{c:T P}
Let $s,t, s^{-1}$ be as in Lemma \ref{l:T Q}, and
$P\subseteq\Z_n$ be a $\mu_q$-invariant subset. Then
we have an $\F_q$-polynomial $u_P(X)$ which is coprime to $X^n-1$
and has leading coefficient $\theta^{-t|P|}$ such that
$$
 f_P(\theta^{-t}X^{s^{-1}}) =u_P(X) f_{\P_{s,t}(P)}(X) .
$$
\end{corollary}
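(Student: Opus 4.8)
The plan is to deduce Corollary \ref{c:T P} from Lemma \ref{l:T Q} by a straightforward multiplicativity argument over the irreducible factors of $f_P$. First, since $P$ is a $\mu_q$-invariant subset of $\Z_n$, Remark \ref{r:known cyclic}(iii) lets me write $P=\bigcup_{j=1}^k Q_j$ as a disjoint union of $q$-cosets $Q_1,\dots,Q_k$, so that $f_P(X)=\prod_{j=1}^k f_{Q_j}(X)$. Because $s^{-1}$ is taken as a positive integer, $f_P(\theta^{-t}X^{s^{-1}})$ and each $f_{Q_j}(\theta^{-t}X^{s^{-1}})$ are genuine polynomials in $X$, and substituting $\theta^{-t}X^{s^{-1}}$ for $X$ in the factorization gives $f_P(\theta^{-t}X^{s^{-1}})=\prod_{j=1}^k f_{Q_j}(\theta^{-t}X^{s^{-1}})$ as an identity in $\F_q[X]$.

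Next I apply Lemma \ref{l:T Q} to each coset $Q_j$, obtaining an $\F_q$-polynomial $u_{Q_j}(X)$, coprime to $X^n-1$ and with leading coefficient $\theta^{-t|Q_j|}$, such that $f_{Q_j}(\theta^{-t}X^{s^{-1}})=u_{Q_j}(X)\,f_{\P_{s,t}(Q_j)}(X)$. Setting $u_P(X)=\prod_{j=1}^k u_{Q_j}(X)$ then yields $f_P(\theta^{-t}X^{s^{-1}})=u_P(X)\prod_{j=1}^k f_{\P_{s,t}(Q_j)}(X)$. To identify the remaining product with $f_{\P_{s,t}(P)}(X)$, I use that $\P_{s,t}$ is a permutation of $\Z_n$: the sets $\P_{s,t}(Q_j)$ are pairwise disjoint $q$-cosets whose union is $\P_{s,t}(P)$, so $f_{\P_{s,t}(P)}(X)=\prod_{i\in\P_{s,t}(P)}(X-\theta^i)=\prod_{j=1}^k f_{\P_{s,t}(Q_j)}(X)$. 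This gives the displayed identity.

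Finally I verify the two asserted properties of $u_P$. It is coprime to $X^n-1$ because each $u_{Q_j}$ is, and a finite product of polynomials each coprime to $X^n-1$ is again coprime to it. Its leading coefficient is $\prod_{j=1}^k\theta^{-t|Q_j|}=\theta^{-t\sum_{j=1}^k|Q_j|}=\theta^{-t|P|}$, since the $Q_j$ partition $P$. The degenerate case $P=\emptyset$ is handled trivially by $u_P=1$. The main (and essentially only) point requiring care is that the substitution be read as an identity of polynomials rather than merely modulo $X^n-1$, so that multiplying the per-coset identities is legitimate; there is no genuine obstacle, as all the substance resides in Lemma \ref{l:T Q}.
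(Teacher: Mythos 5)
Your proposal is correct and follows essentially the same route as the paper's own proof: decompose $P$ into disjoint $q$-cosets, apply Lemma \ref{l:T Q} to each factor, and multiply, taking $u_P=\prod_j u_{Q_j}$ with the coprimality and leading-coefficient claims following immediately. The extra remarks (the substitution being a genuine polynomial identity, the empty case) are fine but not needed beyond what the paper already does.
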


\begin{proof} There are some $q$-cosets $Q_1,\cdots,Q_k$ such that
$P$ is a disjoint union $P=\bigcup_{i=1}^k Q_i$. Then
$\P_{s,t}(P)$ is a disjoint union $\P_{s,t}(P)=\bigcup_{i=1}^k \P_{s,t}(Q_i)$,
and $f_P(X)=\prod_{i=1}^k f_{Q_i}(X)$.
By Lemma \ref{l:T Q},
$f_{Q_i}(\theta^{-t}X^{s^{-1}})=u_{Q_i}(X)f_{\P_{s,t}(Q_i)}(X)$
with $u_{Q_i}(X)$ being coprime to $X^n-1$
and having leading coefficient $\theta^{-t|Q_i|}$. So
$$  f_P(\theta^{-t}X^{s^{-1}}) %\T_{s,t}\big( f_P(X)\big)
  =\prod_{i=1}^k u_{Q_i}(X) \cdot \prod_{i=1}^k f_{\P_{s,t}(Q_i)}(X)
  =\prod_{i=1}^k u_{Q_i}(X) \cdot  f_{\P_{s,t}(P)}(X),
$$
where $\prod_{i=1}^k u_{Q_i}(X)$ is coprime to $X^n-1$
and has leading coefficient $\theta^{-t|P|}$.
\end{proof}

Though the polynomials $\T_{s,t}\big(f_P(X)\big)$ and
$f_{\P_{s,t}(P)}(X)$ are different from each other in general,
by the above result, we can find the polynomial $f_{\P_{s,t}(P)}(X)$
from the polynomial $\T_{s,t}\big(f_P(X)\big)$.

\begin{theorem}\label{t:T P}
Let $s\in\Z_n^*$ and $t\in\Z_n$ satisfying that $qt\equiv t\!\pmod{n}$,
let $P$ be a $\mu_q$-invariant subset of $\Z_n$. Then
$$
 f_{\P_{s,t}(P)}(X)=\gcd\big( f_P(\theta^{-t}X^{s^{-1}}),\,X^n-1\big)
 =\gcd\big(\T_{s.t}\big( f_P(X)\big),\,X^n-1\big).
$$
\end{theorem}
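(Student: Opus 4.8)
The plan is to reduce everything to Corollary~\ref{c:T P}. First I would apply that corollary to the $\mu_q$-invariant set $P$ to obtain an $\F_q$-polynomial $u_P(X)$, coprime to $X^n-1$, such that $f_P(\theta^{-t}X^{s^{-1}})=u_P(X)\,f_{\P_{s,t}(P)}(X)$. Since $\gcd(u_P(X),X^n-1)=1$, it follows (in the polynomial ring $\F_q[X]$, which is a UFD) that
$$
 \gcd\big(f_P(\theta^{-t}X^{s^{-1}}),\,X^n-1\big)=\gcd\big(f_{\P_{s,t}(P)}(X),\,X^n-1\big).
$$

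Next I would observe that $\P_{s,t}(P)$ is again $\mu_q$-invariant: writing $P$ as a disjoint union of $q$-cosets $Q_1,\dots,Q_k$, Lemma~\ref{l:T Q} (which is where the hypothesis $qt\equiv t\!\pmod n$ is used) tells us each $\P_{s,t}(Q_j)$ is again a $q$-coset, so $\P_{s,t}(P)=\bigcup_{j=1}^k\P_{s,t}(Q_j)$ is a union of $q$-cosets. By Remark~\ref{r:known cyclic}(iii), the monic polynomial $f_{\P_{s,t}(P)}(X)$ therefore divides $X^n-1$; hence, taking the gcd to be monic (legitimate since $f_{\P_{s,t}(P)}(X)$ is monic), $\gcd\big(f_{\P_{s,t}(P)}(X),\,X^n-1\big)=f_{\P_{s,t}(P)}(X)$. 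Combined with the previous display this gives the first asserted equality.

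Finally, for the second equality I would recall that, by the very definition of the isometry in Eqn~\eqref{e:iso}, the polynomial $\T_{s,t}\big(f_P(X)\big)$ is the reduction of $f_P(\theta^{-t}X^{s^{-1}})$ modulo $X^n-1$; that is, $\T_{s,t}\big(f_P(X)\big)$ and $f_P(\theta^{-t}X^{s^{-1}})$ differ by a polynomial multiple of $X^n-1$. Since $\gcd\big(a(X),m(X)\big)=\gcd\big(a(X)+b(X)m(X),\,m(X)\big)$ for all polynomials $a,b,m$, replacing $f_P(\theta^{-t}X^{s^{-1}})$ by $\T_{s,t}\big(f_P(X)\big)$ leaves the gcd with $X^n-1$ unchanged, which is the second equality.

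I do not expect a genuine obstacle here: the substantive work has already been carried out in Lemma~\ref{l:T Q} and Corollary~\ref{c:T P}. The only points that deserve a line of care are (a) checking that $\P_{s,t}(P)$ is a union of $q$-cosets, so that $f_{\P_{s,t}(P)}(X)$ actually divides $X^n-1$, and (b) being explicit that the gcd is normalized to be monic, so that the statement is an equality of polynomials on the nose rather than merely up to a nonzero scalar.
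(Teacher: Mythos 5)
Your proposal is correct and follows essentially the same route as the paper: the paper also derives the first equality directly from Corollary~\ref{c:T P} (calling it immediate) and obtains the second by noting that $\T_{s,t}\big(f_P(X)\big)$ is the remainder of $f_P(\theta^{-t}X^{s^{-1}})$ modulo $X^n-1$. You merely spell out the details the paper leaves implicit, namely that $u_P(X)$ being coprime to $X^n-1$ and $f_{\P_{s,t}(P)}(X)$ dividing $X^n-1$ together force the gcd to equal $f_{\P_{s,t}(P)}(X)$.
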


\begin{proof}
From Corollary \ref{c:T P}, we get immediately
that
$$
f_{\P_{s,t}(P)}(X)=\gcd\big( f_P(\theta^{-t}X^{s^{-1}}),\,X^n-1\big).
$$
Note that $\T_{s,t}\big(f_P(X)\big)$ is the remainder of
$f_P(\theta^{-t}X^{s^{-1}})$ modulo $X^n-1$.
The greatest common divisor in the right hand side of the above equality
is clearly equal to $\gcd\big(\T_{s.t}\big( f_P(X)\big),\,X^n-1\big)$.
\end{proof}

We describe two specific cases of the theorem.
Recall that, for a polynomial
$a(X)=a_0+a_1X+\cdots+a_dX^d$,
the polynomial $X^d\cdot a(X^{-1})=a_d+a_{d-1}X+\cdots+a_0X^d$ is
called the {\em reciprocal polynomial} of $a(X)$.

\begin{corollary}\label{c:T1 P}
(i)~
$f_{\P_{1,t}(P)}(X)=\theta^{t|P|}f_P(\theta^{-t}X)$.

(ii)~ Let $a_{0}=(-1)^{|P|}\prod_{i\in P}\theta^{i}$
be the constant term of $f_P(X)$. Then
$$f_{\P_{-1,0}(P)}(X)=f_{-P}(X)
 =a_{0}^{-1}\cdot X^{|P|}f_P(X^{-1}).
$$
\end{corollary}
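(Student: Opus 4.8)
The plan is to derive both parts as direct specializations of Corollary \ref{c:T P} (equivalently Theorem \ref{t:T P}), using the explicit descriptions of the isometries $\T_{1,t}$ and $\T_{-1,0}$ already recorded in Remark \ref{r:iso}(i) and (ii).

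For part (i), I would set $s=1$, so $s^{-1}=1$, and invoke Corollary \ref{c:T P}: there is a polynomial $u_P(X)$, coprime to $X^n-1$ and with leading coefficient $\theta^{-t|P|}$, such that $f_P(\theta^{-t}X)=u_P(X)\,f_{\P_{1,t}(P)}(X)$. The point is that here no reduction modulo $X^n-1$ is needed: since $\deg f_P(\theta^{-t}X)=|P|=\deg f_{\P_{1,t}(P)}(X)$, the cofactor $u_P(X)$ must be a nonzero constant, and that constant is its leading coefficient $\theta^{-t|P|}$. Hence $f_P(\theta^{-t}X)=\theta^{-t|P|}f_{\P_{1,t}(P)}(X)$, which rearranges to $f_{\P_{1,t}(P)}(X)=\theta^{t|P|}f_P(\theta^{-t}X)$. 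One should also remark (or just note it is immediate from $\deg=|P|$ and monic-ness of $f_{\P_{1,t}(P)}$) that the right-hand side is indeed monic, consistent with the definition of $f_{\P_{1,t}(P)}$.

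For part (ii), I would take $s=-1$, $t=0$, so $s^{-1}=n-1$. Again by Corollary \ref{c:T P}, $f_P(X^{n-1})=u_P(X)\,f_{\P_{-1,0}(P)}(X)$ with $u_P$ coprime to $X^n-1$; and $\P_{-1,0}(P)=-P$ by definition of the multiplier $\mu_{-1}$. Rather than chase $u_P$ through the degree-$n|P|$ polynomial $f_P(X^{n-1})$, I would instead compute $\T_{-1,0}(f_P(X))$ directly from the coefficient formula \eqref{e:-1 iso}: writing $f_P(X)=\sum_{j=0}^{|P|}c_jX^j$ with $c_{|P|}=1$, one gets $\T_{-1,0}(f_P(X))=\sum_{j=0}^{|P|}c_j X^{n-j}\bmod (X^n-1)=c_0+\sum_{j=1}^{|P|}c_jX^{n-j}$. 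This is $X^{n-|P|}$ times the reciprocal polynomial $X^{|P|}f_P(X^{-1})$ when $|P|<n$; since $\gcd(X^{n-|P|},X^n-1)=1$, Theorem \ref{t:T P} gives $f_{-P}(X)=\gcd(\T_{-1,0}(f_P(X)),X^n-1)=\gcd(X^{|P|}f_P(X^{-1}),X^n-1)$. Finally $X^{|P|}f_P(X^{-1})=\prod_{i\in P}(1-\theta^iX)$ has all its roots among the $\theta^{-i}$, $i\in P$, which are exactly the roots of $X^n-1$ indexed by $-P$; so it already divides $X^n-1$ up to the scalar needed to make it monic, and $f_{-P}(X)=a_0^{-1}X^{|P|}f_P(X^{-1})$, where $a_0=(-1)^{|P|}\prod_{i\in P}\theta^i=c_0$ is the constant term (hence the leading coefficient of the reciprocal polynomial).

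The only mild subtlety — and the one place to be careful — is the normalization: $f_{\P_{s,t}(P)}$ is by convention monic, whereas $\T_{s,t}(f_P(X))$ and $f_P(\theta^{-t}X^{s^{-1}})$ are not, so each identity must be finished by dividing by the appropriate leading coefficient ($\theta^{-t|P|}$ in (i), $a_0$ in (ii)); Corollary \ref{c:T P} is exactly what tracks that scalar. There is no real obstacle here: the content is entirely contained in the preceding results, and (ii) is really just the observation that $\T_{-1,0}$ acts on a degree-$d<n$ polynomial as (shift by $X^{n-d}$) $\circ$ (reciprocal), after which the gcd with $X^n-1$ strips the harmless $X^{n-d}$ factor.
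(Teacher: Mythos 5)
Your proposal is correct and follows essentially the same route as the paper: part (i) by observing that the cofactor $u_P(X)$ of Corollary \ref{c:T P} must, for degree reasons, be the constant $\theta^{-t|P|}$, and part (ii) by computing $\T_{-1,0}\big(f_P(X)\big)$ from Eqn \eqref{e:-1 iso}, recognizing the reciprocal polynomial shifted by $X^{n-|P|}$, and invoking Theorem \ref{t:T P}. The only nitpick is that $\T_{-1,0}\big(f_P(X)\big)$ equals $X^{n-|P|}\cdot X^{|P|}f_P(X^{-1})$ only modulo $X^n-1$ (they differ by $a_0(X^n-1)$, as the paper writes explicitly), but since you only use its $\gcd$ with $X^n-1$ this does not affect your conclusion.
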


\begin{proof}
Note that both $f_P(X)$ and $f_{\P_{s,t}(P)}(X)$ are
monic divisors of $X^n-1$ with degree $|P|$.

(i).~ In this case where $s=1$ (hence $s^{-1}=1$),
the polynomial $u_P(X)$ in Corollary \ref{c:T P}
is in fact the constant $\theta^{-t|P|}$.

(ii).~ Note that $\P_{-1,0}(P)=(-1)(P+0)=-P=\{-i\mid i\in P\}$.
Denote $|P|=d$. We can write
$f_P(X)=a_0+a_1X+\cdots+a_{d}X^{d}$,
where $a_d=1$ and $a_0=(-1)^d\prod_{i\in P}\theta^i$
because $f_P(X)=\prod_{i\in P}(X-\theta^i)$.
By Eqn \eqref{e:-1 iso},
\begin{eqnarray*}
 \T_{-1,0}\big(f_P(X)\big)
 &=&a_0+a_{1}X^{n-1}+\cdots+a_dX^{n-d}\\
 &=&X^{n-d}(a_0X^d+a_{1}X^{d-1}+\cdots+a_d) -a_0(X^n-1)\\
% &=&X^{n-d}(a_0X^d+a_{1}X^{d-1}+\cdots+a_d) -a_0(X^n-1)\\
 &=& X^{n-d}\cdot X^df_P(X^{-1}) -a_0(X^n-1).
\end{eqnarray*}
Since $\theta^i$ for $i\in(-P)$ are all roots of $X^d f_P(X^{-1})$
(cf. \cite[Lemma 4.4.7]{HP}),
$X^d f_P(X^{-1})$ is a divisor of $X^n-1$. Thus
$$
\gcd\big(\T_{-1.0}\big( f_P(X)\big),\,X^n-1\big)
=a_0^{-1}(a_0X^d+a_{1}X^{d-1}+\cdots+a_d).
$$
By Theorem \ref{t:T P}, we obtain the conclusion (ii).
\end{proof}

We turn to cyclic codes and state the following concise result.

\begin{theorem}\label{t:T and P}
For any $q$-invariant subset $P\subseteq\Z_n$ and any isometry $\T_{s,t}$
$$
 \T_{s,t}( C_P) = C_{\P_{s,t}(P)} .
$$
\end{theorem}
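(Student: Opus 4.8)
The plan is to identify both sides via their check polynomials. First I would recall from Remark~\ref{r:known cyclic}(iv) that $C_P$ is the cyclic code having $f_P(X)$ as a check polynomial, and rewrite it as an annihilator: with $g_P(X)=(X^n-1)/f_P(X)$ the generator polynomial, one checks easily that
$$
 C_P=g_P(X)R_n=\{\,a(X)\in R_n \mid a(X)f_P(X)=0 \text{ in } R_n\,\},
$$
since $a(X)f_P(X)\equiv0\pmod{X^n-1}$ holds iff $X^n-1=g_P(X)f_P(X)$ divides $a(X)f_P(X)$, iff $g_P(X)\mid a(X)$. The same description applies to $C_{\P_{s,t}(P)}$, with check polynomial $f_{\P_{s,t}(P)}(X)$; here I use that $\P_{s,t}(P)$ is again $\mu_q$-invariant --- which is part of what is shown in Lemma~\ref{l:T Q} and Corollary~\ref{c:T P} --- so that $f_{\P_{s,t}(P)}(X)$ and $C_{\P_{s,t}(P)}$ are legitimately defined.

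The main step uses that $\T_{s,t}$ is a bijective algebra automorphism of $R_n$. For $b(X)\in R_n$ we have $b(X)\in\T_{s,t}(C_P)$ iff $\T_{s,t}^{-1}\big(b(X)\big)\in C_P$ iff $\T_{s,t}^{-1}\big(b(X)\big)\,f_P(X)=0$, and applying the homomorphism $\T_{s,t}$ to this last identity rewrites it as $b(X)\,\T_{s,t}\big(f_P(X)\big)=0$ in $R_n$. Now Corollary~\ref{c:T P}, read as an identity in $R_n$ (note that $\T_{s,t}(f_P(X))$ is by definition the reduction of $f_P(\theta^{-t}X^{s^{-1}})$ modulo $X^n-1$), gives $\T_{s,t}(f_P(X))=u_P(X)\,f_{\P_{s,t}(P)}(X)$ with $u_P(X)$ coprime to $X^n-1$, hence a unit of $R_n$. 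Multiplying by $u_P(X)^{-1}$ shows $b(X)\,\T_{s,t}(f_P(X))=0$ iff $b(X)\,f_{\P_{s,t}(P)}(X)=0$; therefore $\T_{s,t}(C_P)=\{b(X)\in R_n\mid b(X)f_{\P_{s,t}(P)}(X)=0\}=C_{\P_{s,t}(P)}$, which is the claim.

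The hard part --- really the only point needing care --- is the one already flagged in the sentence just before Theorem~\ref{t:T P}: $\T_{s,t}(f_P(X))$ is in general neither monic nor a divisor of $X^n-1$, so it is not itself a check polynomial, and one must pass from it to the genuine check polynomial $f_{\P_{s,t}(P)}(X)$; Corollary~\ref{c:T P} (equivalently Theorem~\ref{t:T P}) supplies exactly this, as a factorization in $R_n$ by a unit, after which everything is formal. As a cross-check I would also keep in mind a more hands-on route avoiding check polynomials: from $\T_{s,t}(a)(X)\equiv a(\theta^{-t}X^{s^{-1}})\pmod{X^n-1}$ and $(\theta^k)^n=1$ one gets $\T_{s,t}(a)(\theta^k)=a(\theta^{\,s^{-1}k-t})$ for every $k\in\Z_n$, while $k\notin\P_{s,t}(P)$ iff $s^{-1}k-t\notin P$; hence any $a\in C_P$ makes $\T_{s,t}(a)$ vanish at $\theta^k$ for all $k\notin\P_{s,t}(P)$, giving $\T_{s,t}(C_P)\subseteq C_{\P_{s,t}(P)}$, and equality then follows by counting dimensions, as $\dim\T_{s,t}(C_P)=\dim C_P=|P|=|\P_{s,t}(P)|=\dim C_{\P_{s,t}(P)}$.
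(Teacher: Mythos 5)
Your main argument is correct and is essentially the paper's own proof: both identify $\T_{s,t}(C_P)$ through the check polynomial $\T_{s,t}\big(f_P(X)\big)$ and invoke Corollary~\ref{c:T P} to replace it by $f_{\P_{s,t}(P)}(X)$ up to the unit $u_P(X)$ of $R_n$, your explicit annihilator description of $C_P$ merely spelling out what the paper leaves implicit in Remark~\ref{r:known cyclic}(iv). The evaluation-at-roots-of-unity plus dimension-count argument you append as a cross-check is a valid alternative route, but it is not the paper's argument and is not needed.
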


\begin{proof}
Since $f_P(X)$ is a check polynomial of $C_P$ and $\T_{s,t}$
is an algebra automorphism of $R_n$,
the cyclic code $\T_{s,t}(C_P)$ has a check polynomial
$\T_{s,t}\big(f_P(X)\big)=f_P(\theta^{-t}X^{s^{-1}})\!\pmod{X^n-1}$.
By Corollary \ref{c:T P}, $u_P(X) f_{\P_{s,t}(P)}(X)$
is a check polynomial of $\T_{s,t}(C_P)$,
where $u_P(X)$ is a unit of $R_n$.
So, $f_{\P_{s,t}(P)}(X)$ is also a check polynomial of $\T_{s,t}(C_P)$.
By definition (see Remark \ref{r:known cyclic} (iv)),
$\T_{s,t}( C_P) = C_{\P_{s,t}(P)}$.
\end{proof}

Similar to \cite[Theorem 4]{B13} and \cite[Lemma 3.5]{FZ16},
we have the following result.

\begin{lemma}\label{l:dual code}
Let $P$ be a $q$-invariant subset of $\Z_n$. Then
$$C_P^{\bot}=\T_{-1,0}(C_{\bar P})=C_{-\bar P}\,,$$
where $\bar P\!=\!\Z_n\backslash P$ is the complement of $P$
in $\Z_n$.
\end{lemma}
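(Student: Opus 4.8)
The plan is to establish the two claimed equalities $C_P^\bot = \T_{-1,0}(C_{\bar P})$ and $\T_{-1,0}(C_{\bar P}) = C_{-\bar P}$ separately, attacking the first by a direct check-polynomial computation and obtaining the second as an immediate consequence of Theorem~\ref{t:T and P}.

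First I would record the standard fact about Euclidean duals of cyclic codes in the semisimple setting: if $C_P$ has check polynomial $f_P(X)$, then $C_P^\bot$ has check polynomial equal (up to a unit of $\F_q$) to the reciprocal polynomial of the \emph{generator} polynomial $g_{\bar P}(X) := (X^n-1)/f_P(X) = f_{\bar P}(X)$; equivalently, writing $h(X) = f_P(X)$ and $g(X) = f_{\bar P}(X)$ so that $X^n-1 = g(X)h(X)$, one has $C_P^\bot = C_{\bar P}^{*}$ where the star denotes taking reciprocals of defining data. Concretely: $C_P^\bot$ is the cyclic code whose check polynomial is the reciprocal of $f_{\bar P}(X)$. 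I would cite \cite[Theorem~4.4.9]{HP} (or the analogous statement there) for this, so the argument does not have to reprove it from scratch.

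Next I would identify the reciprocal of $f_{\bar P}(X)$ in our index-set language. By Corollary~\ref{c:T1 P}(ii) applied to $\bar P$ in place of $P$, the reciprocal polynomial $X^{|\bar P|} f_{\bar P}(X^{-1})$ is, up to the nonzero scalar given by its constant term, equal to $f_{-\bar P}(X) = f_{\P_{-1,0}(\bar P)}(X)$; indeed Corollary~\ref{c:T1 P}(ii) says precisely that $f_{\P_{-1,0}(\bar P)}(X) = a_0^{-1} X^{|\bar P|} f_{\bar P}(X^{-1})$, where $a_0$ is the constant term of $f_{\bar P}(X)$. Since a check polynomial is only determined up to a unit of $R_n$ (and a nonzero scalar is such a unit), the code with check polynomial $X^{|\bar P|}f_{\bar P}(X^{-1})$ is exactly $C_{-\bar P}$. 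Combining with the previous paragraph gives $C_P^\bot = C_{-\bar P}$. Finally, $C_{-\bar P} = C_{\P_{-1,0}(\bar P)} = \T_{-1,0}(C_{\bar P})$ by Theorem~\ref{t:T and P} with $s=-1$, $t=0$ (note $q\cdot 0 \equiv 0$, so this isometry is legitimate), which yields all three terms of the asserted chain of equalities.

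The main obstacle I anticipate is bookkeeping with scalar ambiguities rather than anything deep: one must be careful that "check polynomial" is only well-defined up to a unit, that the constant term $a_0$ of $f_{\bar P}(X)$ is genuinely nonzero (true since $0 \notin$ the root set unless $0 \in \bar P$, but even then $f_{\bar P}$ has nonzero constant term iff $0\notin \bar P$ --- here one uses that $X^n-1$ has distinct roots and $\theta$ is a primitive $n$th root of unity, so $\theta^i \neq 0$ always, hence $a_0 = (-1)^{|\bar P|}\prod_{i\in\bar P}\theta^i \neq 0$), and that passing from "$C_P^\bot$ has check polynomial $\sim$ reciprocal of $f_{\bar P}$" to "$C_P^\bot = C_{-\bar P}$" uses the lattice isomorphism $P \mapsto C_P$ of Remark~\ref{r:known cyclic}(iv). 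None of these steps is delicate, but stating them cleanly is where the care goes. An alternative, slightly more self-contained route would bypass the citation to \cite{HP} by computing $\langle \mathbf{a}, \mathbf{b}\rangle$ directly for $\mathbf a \in C_P$ and $\mathbf b \in C_{-\bar P}$ via the relation $X^n - 1 = f_P(X) f_{\bar P}(X)$ and the observation that the coefficient sums defining the inner product are exactly the (reversed) convolution coefficients appearing in this product; I would mention this as a remark but carry out the check-polynomial/reciprocal argument as the primary proof since it is shorter.
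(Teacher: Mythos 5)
Your proof is correct, but your primary route differs from the paper's. You derive $C_P^{\bot}=C_{-\bar P}$ by quoting the classical description of the dual of a cyclic code via reciprocal polynomials (\cite[Theorem 4.4.9]{HP}), then identifying the monic reciprocal of the generator polynomial $f_{\bar P}(X)$ with $f_{-\bar P}(X)$ through Corollary~\ref{c:T1 P}(ii), and finally invoking Theorem~\ref{t:T and P} for $C_{-\bar P}=\T_{-1,0}(C_{\bar P})$. The paper instead keeps the argument self-contained: it also gets $\T_{-1,0}(C_{\bar P})=C_{-\bar P}$ from Theorem~\ref{t:T and P}, but proves $\T_{-1,0}(C_{\bar P})\subseteq C_P^{\bot}$ directly --- for $a(X)\in C_{\bar P}$ and $c(X)\in C_P$ it sets $b(X)=X^{n-1}c(X)$, uses $f_P(X)f_{\bar P}(X)=X^n-1$ to get $a(X)b(X)\equiv 0\pmod{X^n-1}$, and reads off the vanishing of $\langle\T_{-1,0}(a(X)),c(X)\rangle$ from the coefficient of $X^{n-1}$; a dimension count then upgrades the containment to equality. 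This is exactly the ``alternative, more self-contained route'' you sketch in your last paragraph, so in effect you have both proofs; your chosen route is shorter but leans on an external theorem, while the paper's avoids any appeal to \cite{HP} at the cost of a small convolution computation. Your attention to the scalar ambiguities (check polynomials being determined only up to units of $R_n$, and $a_0\neq 0$) is exactly the right bookkeeping, and there is no circularity since Corollary~\ref{c:T1 P} and Theorem~\ref{t:T and P} both precede the lemma.
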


\begin{proof} By Theorem \ref{t:T and P},
$\T_{-1,0}(C_{\bar P})=C_{\P_{-1,0}(\bar P)}=C_{-\bar P}$.

Let $a(X)=\sum_{i=0}^{n-1}a_iX^i\in C_{\bar P}$.
For any $c(X)=\sum_{i=0}^{n-1}c_iX^i\in C_P$,
set $b(X)=X^{n-1}c(X)\!\pmod{X^n-1}$. Then $b(X)\in C_P$
(as $C_P$ is an ideal) and
$$b(X)=c_1+c_2X+\cdots+c_{n-1}X^{n-2}+c_0X^{n-1}.$$
Since $f_P(X)f_{\bar P}(X)=X^n-1$,
$a(X)b(X)\equiv 0\!\pmod{X^n-1}$. Observing the coefficient of $X^{n-1}$
in $a(X)b(X)\!\pmod{X^n-1}$, we obtain that
$$
a_0c_0+a_1c_{n-1}+\cdots+a_{n-2}c_2+a_{n-1}c_1=0.
$$
By Eqn \eqref{e:-1 iso},
$$\T_{-1,0}\big(a(X)\big)=a_0+a_{n-1}X+\cdots+a_1X^{n-1}.$$
So the (Euclidean) inner product
$$
\big\langle\T_{-1,0}\big(a(X)\big),c(X)\big\rangle
=a_0c_0+a_{n-1}c_1+\cdots+a_{1}c_{n-1}=0
$$
Thus $\T_{-1,0}\big(a(X)\big)\in C_P^\bot$.
In conclusion, $\T_{-1,0}(C_{\bar P})\subseteq C_P^\bot$.
Finally,
$$\dim \T_{-1,0}(C_{\bar P})=\dim C_{\bar P}
   =|\bar P|=n-|P|=n-\dim C_P=\dim C_P^\bot;$$
from which we further get that $\T_{-1,0}(C_{\bar P})=C_P^\bot$.
\end{proof}

\begin{theorem}\label{t:dual splitting}
Let $P$ be a $q$-invariant subset of $\Z_n$. The cyclic code
$C_P\subseteq R_n$ is an iso-self-dual cyclic code
if and only if there is a positive integer $s\in\Z_n^*$ and an integer
$t\in\Z_n$ satisfying that $qt\equiv t\!\pmod n$ such that
$\Z_n=P\bigcup \P_{s,t}(P)$ is a partition.
If it is the case, then the following hold:
\begin{itemize}
\item[\rm(i)]
$\T_{-s,t}(C_P)=C_P^\bot$.
\item[\rm(ii)]
$h(X)$ is a check polynomial of $C_P$ if and only if
$\T_{-s,t}(h(X))$ is a check polynomial of $C_P^\bot$
(equivalently, $\T_{-s,t}(h(X))$ is a generator polynomial of $C_P$).
\item[\rm(iii)]
$C_P$ and $C_P^\bot$ have the same weight distributions.
\end{itemize}
\end{theorem}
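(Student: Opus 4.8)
The plan is to reduce everything to the preceding lemmas. For the main equivalence, the key observation is that $C_P$ iso-self-dual means $\T_{s,t}(C_P)=C_P^\bot$ for some admissible pair $(s,t)$. By Theorem \ref{t:T and P}, $\T_{s,t}(C_P)=C_{\P_{s,t}(P)}$, while by Lemma \ref{l:dual code}, $C_P^\bot=C_{-\bar P}$. Since the correspondence $P\mapsto C_P$ between $\mu_q$-invariant subsets and cyclic codes is a bijection (Remark \ref{r:known cyclic}(iv)), the equation $\T_{s,t}(C_P)=C_P^\bot$ is equivalent to $\P_{s,t}(P)=-\bar P$, i.e. $\P_{s,t}(P)=\Z_n\setminus(-P)$. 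Applying the multiplier $\mu_{-1}$ (which sends a set to its negative) this reads $\mu_{-1}\P_{s,t}(P)=\bar P$; but $\mu_{-1}\P_{s,t}=\mu_{-1}\mu_s\tau_t=\mu_{-s}\tau_t=\P_{-s,t}$, so the condition becomes $\P_{-s,t}(P)=\bar P$, equivalently $\Z_n=P\cup\P_{-s,t}(P)$ is a partition. Replacing the parameter $s$ by $-s$ (still a unit) gives exactly the stated splitting condition $\Z_n=P\cup\P_{s,t}(P)$.

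Next I would record the back-translation, which simultaneously yields (i). Running the computation in reverse: if $\Z_n=P\cup\P_{s,t}(P)$ is a partition, then $\P_{s,t}(P)=\bar P$, hence $-\P_{s,t}(P)=-\bar P$, i.e. $\P_{-s,t}(P)=-\bar P$ (again using $\mu_{-1}\P_{s,t}=\P_{-s,t}$). By Theorem \ref{t:T and P} this says $\T_{-s,t}(C_P)=C_{-\bar P}$, and by Lemma \ref{l:dual code} $C_{-\bar P}=C_P^\bot$. This is precisely part (i). One bookkeeping point to verify is that the pair $(-s,t)$ is admissible for defining an isometry: $-s\in\Z_n^*$ since $s\in\Z_n^*$, and the condition $qt\equiv t\pmod n$ is on $t$ alone and is unchanged. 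So $\T_{-s,t}$ is a genuine isometry by Lemma \ref{l:iso}.

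For part (ii), I would invoke the fact that $\T_{-s,t}$ is an algebra automorphism of $R_n$: a polynomial $h(X)$ is a check polynomial of $C_P$ exactly when $\langle h(X)\rangle$ equals the ideal generated by the check polynomial (equivalently $h(X)$ generates the image annihilator structure), and automorphisms carry check polynomials of $C_P$ to check polynomials of $\T_{-s,t}(C_P)=C_P^\bot$. Concretely, if $h(X)$ is the (monic) check polynomial $f_P(X)$ up to a unit, then by Corollary \ref{c:T P}, $\T_{-s,t}(h(X))$ differs from $f_{\P_{-s,t}(P)}(X)=f_{-\bar P}(X)$ by a unit of $R_n$, and $f_{-\bar P}(X)$ is the check polynomial of $C_{-\bar P}=C_P^\bot$. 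The parenthetical remark follows because the check polynomial of $C_P^\bot$ equals $(X^n-1)/g(X)$ where $g(X)$ is the generator polynomial of $C_P^\bot$, and dually the generator polynomial of $C_P$ is $(X^n-1)/f_P(X)=f_{\bar P}(X)$; one checks $\T_{-s,t}(f_P(X))$ generates $C_P$ as an ideal by comparing defining sets via $\P_{-s,t}$, using that $\P_{-s,t}$ is an involution-like pairing on this configuration, or more simply noting $\T_{-s,t}$ is an automorphism with $\T_{-s,t}(C_P^\bot)=\T_{-s,t}\T_{-s,t}(C_P)$, which one identifies with $C_P$ when $\P_{-s,t}$ has order two on the relevant subsets — this is the point I expect to need a short extra argument. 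Finally, (iii) is immediate: $\T_{-s,t}$ is an isometry, so it preserves Hamming weights of all codewords, hence $C_P$ and $\T_{-s,t}(C_P)=C_P^\bot$ have identical weight distributions.

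The main obstacle is not in (i) or (iii), which are direct consequences of Theorem \ref{t:T and P} and Lemma \ref{l:dual code}; it is the precise wording of (ii), specifically the equivalence in the parenthetical claim that $\T_{-s,t}(h(X))$ being a check polynomial of $C_P^\bot$ is the same as its being a generator polynomial of $C_P$. Handling this cleanly requires knowing that applying $\T_{-s,t}$ twice returns one to $C_P$ (so that $C_P$ and $C_P^\bot$ are interchanged by this single isometry), which in turn rests on $\P_{-s,t}\P_{-s,t}$ fixing $P$ — and that holds because $\P_{-s,t}(P)=\bar P$ and $\P_{-s,t}(\bar P)=P$ under the partition hypothesis. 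I would state this reciprocity explicitly as the crux of (ii) rather than leaving it implicit.
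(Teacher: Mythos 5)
Your treatment of the main equivalence, of (i), and of (iii) coincides with the paper's: reduce $\T_{s,t}(C_P)=C_P^\bot$ to $\P_{s,t}(P)=-\bar P$ via Theorem \ref{t:T and P} and Lemma \ref{l:dual code}, negate to get $\P_{-s,t}(P)=\bar P$, and relabel $s\mapsto -s$; then (iii) is immediate because $\T_{-s,t}$ preserves Hamming weight. The paper disposes of all of (ii) with the one sentence that $\T_{-s,t}$ is an algebra automorphism, which does prove the first clause of (ii): since $h$ is a check polynomial of $C$ exactly when $C=\{a: a\cdot h\equiv 0\}$, an automorphism carries check polynomials of $C_P$ to check polynomials of $\T_{-s,t}(C_P)=C_P^\bot$.

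Where you go beyond the paper --- the parenthetical in (ii) --- your argument has a sign slip that matters. After the relabelling, the partition hypothesis is $\P_{s,t}(P)=\bar P$, hence $\P_{-s,t}(P)=-\bar P$ and $\P_{-s,t}(\bar P)=-P$; you state this correctly in your second paragraph, but in the last paragraph you assert ``$\P_{-s,t}(P)=\bar P$ and $\P_{-s,t}(\bar P)=P$'', which confuses $\P_{s,t}$ with $\P_{-s,t}$. Consequently $\langle\T_{-s,t}(h)\rangle=\T_{-s,t}(\langle f_P\rangle)=\T_{-s,t}(C_{\bar P})=C_{\P_{-s,t}(\bar P)}=C_{-P}$, and this equals $C_P$ only when $-P=P$. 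Put differently, $\T_{-s,t}(h)$ is a unit times $f_{-\bar P}$, the check polynomial of $C_P^\bot=C_{-\bar P}$, whereas the generator polynomial of $C_P$ is $f_{\bar P}$; these generate the same ideal only when $\bar P=-\bar P$. So the ``order two'' reciprocity you invoke does not deliver the parenthetical, and in fact the parenthetical does not follow from the stated hypotheses: in the paper's own $q=5$, $n=8$ example one has $P=\{0,1,4,5\}$ and $-P=\{0,3,4,7\}\ne P$, and the polynomial $X^4-X^2-2=f_{-\bar P}$ obtained there generates $C_{-P}$, not $C_P$, whose generator polynomial is $f_{\bar P}=X^4-2X^2+2$. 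You were right to flag this clause as the crux, but the honest conclusion is that it requires the extra hypothesis $-P=P$ (equivalently a self-reciprocal $f_{\bar P}$), or should be replaced by the correct statement that $\T_{-s,t}(h)$ generates $C_{-P}$; the paper's proof does not address it either.
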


\begin{proof} Let $\T_{s,t}$ be an isometry of $R_n$.
By Theorem \ref{t:T and P} and Lemma \ref{l:dual code},
 $\T_{s,t}(C_P)=C_P^\bot$ is equivalent to
 $C_{\P_{s,t}(P)}=C_{-\bar P}$, i.e.,
$-\bar P=\P_{s,t}(P)=s(P+t)$.
Thus  $\T_{s,t}(C_P)=C_P^\bot$ if and only if
$\bar P=-s(P+t)=\P_{-s,t}(P)$.
The latter one is equivalent to that
$\Z_n=P\bigcup \P_{-s,t}(P)$ is a partition of $\Z_n$.

Replacing $-s$ by $s$ in the above argument, we obtain the conclusions (i).
Since $\T_{-s,t}$ is an algebra automorphism of $R_n$
and preserves Hamming weight,
(ii) and (iii) are obtained immediately.
\end{proof}

\begin{definition}\label{splitting}\rm
(i)~
Let $s\in\Z_n^*$, $t\in\Z_n$ satisfying that $qt\equiv t\!\pmod n$,
and $P$ be a  $\mu_q$-invariant subset of $\Z_n$.
If $\Z=P\bigcup\P_{s,t}(P)$ is a partition of $\Z_n$,
then we say that $P,\P_{s,t}(P)$ are a {\em Type-I duadic splitting of $\Z_n$}
given by $\P_{s,t}$.

(ii)~ If there are $s,t,P$ such that the above (i) holds, then we say that
Type-I duadic splittings of $\Z_n$ exist.
\end{definition}

Thus, the study of iso-self-dual cyclic codes is translated
to the study of Type-I duadic splittings of $\Z_n$.
%For example,
Theorem~\ref{t:dual splitting} says that
iso-self-dual cyclic codes of length $n$ exist if and only if
Type-I duadic splittings of $\Z_n$ exist; and, in that case,
the information of the iso-self-dual cyclic codes
can be obtained from the information of the
Type-I duadic splittings of~$\Z_n$.

\section{Type-I duadic splittings of $\Z_n$}

Keep the notations in Section 2. For convenience, we denote
\begin{equation}\label{G_q,n}
 G_{q,n}=\big\{\P_{s,t}\,\big|\,
    s\in\Z_n^*,~ t\in\Z_n\mbox{ satisfying that } qt\equiv t\!\!\!\pmod n\big\},
\end{equation}
where $\P_{s,t}$ is the $q$-permutation
of $\Z_n$ defined in Eqn \eqref{q-permutation}.
It is easy to check that
\begin{equation*}%\label{ope rule}
\P_{s',t'}\P_{s,t}=\P_{s's,t+s^{-1}t'},~~~~~ \forall~\P_{s',t'},\P_{s,t}\in G_{q,n};
\end{equation*}
in particular,
\begin{equation*}
 \P_{s,t}^{-1}=\P_{s^{-1},-st},~~~~~ \forall~\P_{s,t}\in G_{q,n}.
\end{equation*}
So, $G_{q,n}$ is a permutation group of $\Z_n$,
which we call the {\em $q$-permutation group} of $\Z_n$.
Recall from Remark \ref{r:known cyclic} (i) that
we denote $\mu_s=\P_{s,0}$ and call it a {\em multiplier};
we denote $\tau_t=\P_{1,t}$ and call it a {\em $q$-translation} on $\Z_n$;
and $\P_{s,t}=\mu_s\tau_t$.

From the condition $qt\equiv t\!\pmod n$, we
have $\tau_t\mu_q=\mu_q\tau_t$.
So the subgroup $\langle\mu_q\rangle$ generated by $\mu_q$
is a central subgroup of $G_{q,n}$.
Then the group $G_{q,n}$ acts on the quotient set
$\Z_n/\mu_q$ consisting of $q$-cosets.
The properties of this action determine
the existence and constructions of the Type-I duadic splittings of $\Z_n$.
For fundamentals of group actions, please refer to \cite{AB}.
Some elementary results on group actions will be cited later
which we list in the following remark; cf. \cite{AB} or
\cite[Lemma 6, Lemma 7]{CDFL} for details.

\begin{remark}\label{r:G action}\rm
Let a finite group $G$ act on a finite set $X$.
For $\P\in G$, by $\langle\P\rangle$ we denote the subgroup generated by $\P$.
Any orbit of the group $\langle \P\rangle$ on $X$ is abbreviated as a $\P$-orbit.

(i)~ Let $\P\in G$. Then there is a subset $S\subseteq X$ such that
$X=S\bigcup\P(S)$ is a partition of $X$ if and only if
the length of any $\P$-orbit on $X$ is even.
If it is the case, for each $\P$-orbit $(x_1,x_2,\cdots,x_{2m})$,
i.e., $\P(x_i)=x_{i+1}$ for $1\le i<2m$ and $\P(x_{2m})=x_{1}$,
we can place  $x_1,x_3,\cdots,x_{2m-1}$ into $S$;
once this is done for every $\P$-orbit, we obtain
a subset $S\subseteq X$ such that $X=S\bigcup\P(S)$ is a partition.
%$\P(\{x_1,x_3,\cdots,x_{2m-1}\})=\{x_2,x_4,\cdots,x_{2m}\}$.

(ii)~ The length of any $\P$-orbit on $X$ is a divisor of
%${\rm ord}_{G}(\P)$, where ${\rm ord}_{G}(\P)$ denotes
the order of $\rho$ in the group $G$. In particular,
if the order of $\P$ is a power of $2$, then
there is a subset $S\subseteq X$ such that
$X=S\bigcup\P(S)$ is a partition of $X$ if and only if
there is no element of $X$ which is fixed by $\P$.

(iii)~ Let another finite group $G'$ act on a finite set $X'$.
Then the finite group $G\times G'$ acts on the finite set $X\times X'$.
For $(\P,\P')\in G\times G'$ and $(x.x')\in X\times X'$,
the length of the $(\P,\P')$-orbit on $X\times X'$ containing $(x,x')$
equals to the least common multiple of
the lengths of the $\P$-orbit on $X$ containing $x$ and
the length of the $\P'$-orbit on $X'$ containing $x'$.
\end{remark}

\begin{lemma}\label{l:Q fixed}
Let $\P_{s,t}\in G_{q,n}$.
A $q$-coset $Q\in\Z_n/\mu_q$ is fixed by $\P_{s,t}$ if and only if
there is a $k\in Q$ and an integer $j\ge 0$ such that
$(q^j-s)k\equiv st\!\pmod n$.
\end{lemma}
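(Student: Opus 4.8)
The plan is to unwind the definitions on both sides and reduce the statement to a concrete congruence. A $q$-coset $Q$ is fixed by $\P_{s,t}$ precisely when $\P_{s,t}(Q)=Q$ as a subset of $\Z_n$. Recall $\P_{s,t}(i)=s(i+t)\pmod n$, and that $Q$ itself is a $\mu_q$-orbit, so $Q=\{q^j k \pmod n \mid j\ge 0\}$ for any chosen representative $k\in Q$. First I would observe that $\P_{s,t}(Q)$ is again a single $q$-coset (this follows from the centrality of $\langle\mu_q\rangle$ in $G_{q,n}$, already noted in the text, or directly from the computation in Lemma~\ref{l:T Q}). Hence $\P_{s,t}(Q)=Q$ holds if and only if the two cosets share one element, i.e. there exist $k\in Q$ and $j\ge 0$ with $\P_{s,t}(k)\equiv q^j k\pmod n$.

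Next I would expand that single membership condition: $\P_{s,t}(k)=s(k+t)=sk+st$, so the condition becomes $sk+st\equiv q^j k\pmod n$, which rearranges exactly to $(q^j-s)k\equiv st\pmod n$. This gives the ``only if'' direction immediately, and for the ``if'' direction one checks that a single such coincidence forces $\P_{s,t}(Q)=Q$: since $\P_{s,t}(k)$ lies in $Q$ and $\P_{s,t}$ commutes with $\mu_q$, the whole $\mu_q$-orbit of $\P_{s,t}(k)$ — which is $\P_{s,t}(Q)$ — is contained in $Q$, and equality of cardinalities $|\P_{s,t}(Q)|=|Q|$ (also from the commuting action) upgrades containment to equality.

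The only genuinely delicate point is making precise that ``the two $q$-cosets $Q$ and $\P_{s,t}(Q)$ meet'' is equivalent to ``$\P_{s,t}(k)\in Q$ for \emph{some} representative $k$'', and then to the stated congruence with an \emph{unquantified-over-representative} $k$ together with \emph{some} exponent $j\ge 0$. This is really just the elementary fact that distinct orbits of a group action are disjoint, applied to $\langle\mu_q\rangle$ acting on $\Z_n$; I expect this to be the main (minor) obstacle only in the sense that the bookkeeping between ``there is $k\in Q$'' and ``for a fixed $k\in Q$ there is $j$'' must be handled cleanly — but since $Q$ is a single $\mu_q$-orbit, replacing $k$ by $q^i k$ merely shifts $j$, so the existential over $k$ and the existential over $j$ absorb one another. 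I would phrase the proof around a fixed representative $k\in Q$ and the displayed equivalence
\[
 \P_{s,t}(Q)=Q \iff \P_{s,t}(k)\in Q \iff \exists\, j\ge 0:\ (q^j-s)k\equiv st\pmod n,
\]
with the first equivalence justified by orbit disjointness plus cardinality, and the second by direct substitution.
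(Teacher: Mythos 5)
Your proposal is correct and follows essentially the same route as the paper: since $\P_{s,t}(Q)$ is again a $q$-coset, the condition $\P_{s,t}(Q)=Q$ is equivalent to $\P_{s,t}(Q)\cap Q\ne\emptyset$, i.e.\ to $\P_{s,t}(k)=q^jk$ for some $k\in Q$ and $j\ge 0$, which rearranges to $(q^j-s)k\equiv st\pmod n$. The extra care you take with orbit disjointness and the interchangeability of the existentials over $k$ and $j$ is sound but goes beyond what the paper spells out.
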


\begin{proof} Note that, since $\P_{s,t}(Q)$ is still a $q$-coset,
$\P_{s,t}(Q)=Q$ if and only if $\P_{s,t}(Q)\bigcap Q\ne\emptyset$.
Thus, $\P_{s,t}(Q)=Q$ if and only if there is a $k\in Q$ and an integer
$j\ge 0$ such that $\P_{s,t}(k)=q^jk$.
The latter is just $(q^j-s)k\equiv st\!\pmod n$.
\end{proof}

\begin{remark}\label{nu} \rm
(i)~ By $\nu_2(m)$ we denote the $2$-adic valuation
of any non-zero integer $m$,
i.e., $2^{\nu_2(m)}$ is the largest power of $2$ which divides $m$.
For convenience, we appoint that $\nu_2(0)=\infty$.

(ii)~ If $t=0$, then
``$k=0$'' is always a solution for $(q^j-s)k\equiv st\!\pmod n$;
hence the Type-I duadic splittings of $\Z_n$ given
by any multiplier $\P_{s,0}=\mu_s$ do not exist.
Thus, in the following we consider the case
where $t\,{\not\equiv}\,0\!\pmod n$.

(iii)~
From Lemma \ref{l:t,s,duadic} till to Theorem \ref{t:duadic},
we consider the case where $n=2^v$ with $v\ge 1$.
In that case, $q$ must be odd since we have assumed that
${\gcd(q,n)=1}$, and,
with the notation in (i),
the condition that $qt\equiv t\!\pmod n$ in Eqn~\eqref{G_q,n}
(i.e. $(q-1)t\equiv 0\!\pmod{2^v}$) can be rewritten as
$v\le\nu_2(q-1)+\nu_2(t)$.
\end{remark}

\begin{lemma}\label{l:t,s,duadic}
Let $n=2^v$ with $v\ge 1$.
Let $\P_{s,t}\in G_{q,2^v}$ with $t\,{\not\equiv}\,0\!\pmod{2^v}$.
The Type-I duadic splittings of $\Z_{2^v}$ given by $\P_{s,t}$
exist if and only if
\begin{equation}\label{t,s,duadic}
\nu_2(q^j-s)>\nu_2(t),~~~~~~~~ \forall~j\ge 0.
\end{equation}
\end{lemma}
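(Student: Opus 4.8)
The plan is to apply Remark~\ref{r:G action}(i) to the action of $\langle\P_{s,t}\rangle$ on $\Z_{2^v}$: the Type-I duadic splittings given by $\P_{s,t}$ exist if and only if every $\P_{s,t}$-orbit on $\Z_{2^v}$ has even length, equivalently, no element of $\Z_{2^v}$ is fixed by any odd power of $\P_{s,t}$ — but since orbit lengths are powers of $2$ (divisors of the order of $\P_{s,t}$, which divides a power of $2$), having all orbit lengths even is the same as having no $\P_{s,t}$-fixed point at all. Wait — that last simplification needs care, because the order of $\P_{s,t}$ in $G_{q,2^v}$ need not be a power of $2$ in general; so I will instead argue directly that the orbit lengths here are forced to be powers of $2$, or else bypass this and work with the condition ``some power of $\P_{s,t}$ fixes a point'' via Lemma~\ref{l:Q fixed}. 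The cleanest route: a splitting fails to exist iff some $\P_{s,t}$-orbit has odd length, iff some odd power $\P_{s,t}^{\ell}$ has a fixed point. So I want to translate ``$\P_{s,t}^\ell$ has a fixed point in $\Z_{2^v}$'' into arithmetic and show this happens for some odd $\ell$ precisely when \eqref{t,s,duadic} fails.

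**The arithmetic core.** Using the composition rule $\P_{s',t'}\P_{s,t}=\P_{s's,\,t+s^{-1}t'}$ from Section~3, one computes $\P_{s,t}^{\ell}=\P_{s^{\ell},\,T_\ell}$ where $T_\ell = t(1+s+s^2+\cdots+s^{\ell-1})$ (working mod $2^v$); here I should double-check the exact form of the accumulated translation against the stated formula, but it will be $t$ times a geometric-type sum in $s$. Then $\P_{s,t}^\ell$ fixes some $k\in\Z_{2^v}$ iff $s^\ell(k+T_\ell)\equiv k\pmod{2^v}$, i.e. $(s^\ell-1)k\equiv s^\ell T_\ell\pmod{2^v}$. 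Since $\gcd(s,2^v)=1$, this is solvable in $k$ iff $\nu_2(s^\ell-1)\le\nu_2(s^\ell T_\ell)=\nu_2(T_\ell)$ (using $\nu_2(s^\ell)=0$). Now $s^\ell-1 = (s-1)(1+s+\cdots+s^{\ell-1})$, so when $\ell$ is odd, $1+s+\cdots+s^{\ell-1}$ is a sum of an odd number of odd terms, hence odd, so $\nu_2(s^\ell-1)=\nu_2(s-1)$; and similarly $\nu_2(T_\ell)=\nu_2(t)$ for $\ell$ odd. Thus an odd power of $\P_{s,t}$ has a fixed point iff $\nu_2(s-1)\le\nu_2(t)$.

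**Reconciling with the $j$-indexed condition.** It remains to see why the displayed condition is phrased as $\nu_2(q^j-s)>\nu_2(t)$ for all $j\ge 0$, rather than just $\nu_2(s-1)>\nu_2(t)$. The point is that the relevant fixed-point question is not about $\P_{s,t}$ acting on elements of $\Z_{2^v}$ but, equivalently via Lemma~\ref{l:Q fixed}, about whether a $q$-coset $Q$ is fixed — and a $q$-coset is fixed iff $(q^j-s)k\equiv st\pmod{2^v}$ for some $k\in Q$, $j\ge0$. So I will run the orbit argument on $\Z_{2^v}/\mu_q$ instead of on $\Z_{2^v}$ (which is the set the splitting is really built on, since $P$ must be $\mu_q$-invariant): Type-I splittings given by $\P_{s,t}$ exist iff every $\P_{s,t}$-orbit on $\Z_{2^v}/\mu_q$ is even, iff no $q$-coset is fixed by any odd power of $\P_{s,t}$. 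Expanding ``$\P_{s,t}^\ell$ fixes $Q$'' by Lemma~\ref{l:Q fixed} applied to $\P_{s,t}^\ell=\P_{s^\ell,T_\ell}$ gives the solvability of $(q^j - s^\ell)k\equiv s^\ell T_\ell\pmod{2^v}$; for $\ell$ odd this has the same $2$-adic content as $(q^j-s)k\equiv st\pmod{2^v}$, which is solvable iff $\nu_2(q^j-s)\le\nu_2(t)$ for some $j$. Hence splittings exist iff $\nu_2(q^j-s)>\nu_2(t)$ for all $j\ge0$, which is \eqref{t,s,duadic}.

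**Main obstacle.** The delicate point is the passage between ``odd orbit length'' and ``odd power fixes a point,'' and making sure the reduction from arbitrary powers to odd powers is valid — i.e. that if $\P_{s,t}^\ell$ fixes $Q$ for some $\ell$ then it does so for an odd $\ell$. This needs the structural fact that orbit lengths on $\Z_{2^v}/\mu_q$ are themselves powers of $2$ (so an orbit of length $2^a m$ with $m$ odd forces $m=1$), which I expect follows from $|\Z_{2^v}/\mu_q|$ dividing a power of $2$ together with $\langle\mu_q\rangle$ being central, or more directly from a $2$-adic computation: I will verify that if $\P_{s,t}^\ell$ fixes a coset then already $\P_{s,t}^{\ell/2}$ does when $\ell$ is even, by checking the $2$-adic valuations $\nu_2(q^j-s^\ell)$ versus $\nu_2(q^{j'}-s^{\ell/2})$ collapse appropriately. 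Handling this halving step cleanly, and tracking the exact geometric-sum formula for $T_\ell$ modulo $2^v$, are the two spots where I expect to spend real effort; the rest is bookkeeping with $\nu_2$.
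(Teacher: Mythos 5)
Your overall strategy (reduce existence of the splitting to a fixed-point condition on the quotient set $\Z_{2^v}/\mu_q$, then translate via Lemma \ref{l:Q fixed}) is the right one, and your forward direction essentially matches the paper's. But there is a genuine gap at exactly the point you flag as the ``main obstacle,'' and the repair you sketch does not work. You doubt that the order of $\P_{s,t}$ in $G_{q,2^v}$ is a power of $2$, and therefore detour through odd powers $\P_{s,t}^{\ell}$; this forces you to prove (a) the composition formula for $\P_{s,t}^{\ell}$, (b) that a $q$-coset fixed by some odd power is already fixed by $\P_{s,t}$ itself, and (c) that all $\P_{s,t}$-orbit lengths on $\Z_{2^v}/\mu_q$ are powers of $2$. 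You leave (c) unproved, and the justification you propose --- that $|\Z_{2^v}/\mu_q|$ divides a power of $2$ --- is false: for $q=3$, $v=3$ the set $\Z_8/\mu_3$ consists of five $3$-cosets (this is the paper's own example). In any case the cardinality of the acted-on set being a power of $2$ would not by itself force orbit lengths to be powers of $2$; what is needed is that the order of the permutation is a power of $2$.

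The fact you doubted is true and is the paper's one-line key: $(s,t)\mapsto\P_{s,t}$ is a bijection onto $G_{q,2^v}$, $|\Z_{2^v}^*|=2^{v-1}$, and the admissible $t$ form an additive subgroup of $\Z_{2^v}$, so $|G_{q,2^v}|$ is a power of $2$ and hence so is the order of $\P_{s,t}$. Then every orbit length is a power of $2$, and Remark \ref{r:G action}(ii) reduces ``all orbits have even length'' to ``no $q$-coset is fixed by $\P_{s,t}$ itself'' --- no odd powers, no formula for $\P_{s,t}^{\ell}$, no halving step. From there Lemma \ref{l:Q fixed} gives: a coset is fixed iff $(q^j-s)k\equiv st\pmod{2^v}$ is solvable for some $j\ge 0$ and some $k$, which (since $\nu_2(st)=\nu_2(t)<v$) happens iff $\nu_2(q^j-s)\le\nu_2(t)$ for some $j$ --- exactly the negation of \eqref{t,s,duadic}. (The paper also verifies that \eqref{t,s,duadic} does not depend on the integer representatives chosen for $s$ and $t$; you omit this minor point.) Once the order-of-$G_{q,2^v}$ observation is inserted, your argument collapses to the paper's.
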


\begin{proof}
Since two integers in a residue class of $\Z_{2^v}$ have different
$2$-adic valuations, we need to show that the 
inequality \eqref{t,s,duadic} is independent of the choices 
of the integers $s,t$ from their residue classes modulo $2^v$.
Suppose that $s\in\Z_{2^v}^*$ satisfies the inequality \eqref{t,s,duadic},
i.e., $\nu_2(q^j-s)>\nu_2(t)$.
For any $s'\equiv s\!\pmod {2^v}$, we write $s'=s+2^vk$.
Since $0\le \nu_2(t)<v$, we get
$$
\nu_2(q^j-s')=\nu_2(q^j-s-2^vk)
\ge \min\{\nu_2(q^j-s),v\}>\nu_2(t).
$$
So, $s'$ still satisfies the inequality \eqref{t,s,duadic}.
Similarly, if $\nu_2(q^j-s)>\nu_2(t)$ and $t'=t+2^v k$,
then $\nu_2(t')=\nu_2(t)$ since $\nu_2(t)<v$, hence
$\nu_2(q^j-s)>\nu_2(t')$.

It is easy to see that the group $G_{q,2^v}$ is of order power of $2$,
hence the order of $\P_{s,t}$ is a power of $2$.
Thus the Type-I duadic splittings of $\Z_{2^v}$ given by $\P_{s,t}$
exist if and only if there is no $q$-coset of $\Z_{2^v}$ which is fixed by $\P_{s,t}$,
please see Remark \ref{r:G action} (ii).

Suppose that the condition \eqref{t,s,duadic} holds.
Then for any $j\ge 0$,
$$\nu_2((q^j-s)k)\ge \nu_2(q^j-s)>\nu_2(t)=\nu_2(st),
  ~~~~~~ \forall~k\in\Z_n.$$
Since $t\,{\not\equiv}\,0\!\pmod{2^v}$,
for any $k\in\Z_n$ we get that
$(q^j-s)k\,{\not\equiv}\,st\!\pmod{2^v}$.
By Lemma~\ref{l:Q fixed},
there is no $q$-coset $Q\in\Z_n/\mu_q$ which is fixed by $\P_{s,t}$.

Conversely, assume that there is a $j\ge 0$ such that
$\nu_2(q^j-s)\le\nu_2(t)=\nu_2(st)$. Then we can find an integer $k\in\Z_{2^v}$
such that $(q^j-s)k\equiv st\!\pmod{2^v}$.
By Lemma \ref{l:Q fixed}, the $q$-coset $Q_k$
containing $k$ is fixed by $\P_{s,t}$.
\end{proof}

\begin{corollary}\label{c:t,duadic}
Let $n=2^v$ with $v\ge 1$. Let $\P_{1,t}=\tau_t\in G_{q,2^v}$
with $\nu_2(t)<v$ be a $q$-translation.
The Type-I duadic splittings of $\Z_{2^v}$ given by $\tau_{t}$
exist if and only if $$\nu_2(t)<\nu_2(q-1).$$
\end{corollary}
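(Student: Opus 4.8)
The plan is to specialize Lemma~\ref{l:t,s,duadic} to the case $s=1$ and then analyze the resulting $2$-adic condition. Since $\P_{1,t}=\tau_t\in G_{q,2^v}$ and the hypothesis $\nu_2(t)<v$ forces $t\,{\not\equiv}\,0\!\pmod{2^v}$, Lemma~\ref{l:t,s,duadic} applies verbatim and tells us that the Type-I duadic splittings of $\Z_{2^v}$ given by $\tau_t$ exist if and only if $\nu_2(q^j-1)>\nu_2(t)$ for every integer $j\ge 0$. Thus the corollary reduces to showing that this whole family of inequalities is equivalent to the single inequality $\nu_2(t)<\nu_2(q-1)$.

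First I would dispose of the index $j=0$: here $q^0-1=0$, and by the convention $\nu_2(0)=\infty$ from Remark~\ref{nu}~(i) the inequality $\nu_2(q^0-1)>\nu_2(t)$ holds automatically. Hence only the indices $j\ge 1$ are constraining, and among these, taking $j=1$ shows immediately that the existence of the splittings forces $\nu_2(q-1)>\nu_2(t)$, which is the necessity direction.

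For the converse I would use the elementary factorization $q^j-1=(q-1)(1+q+\cdots+q^{j-1})$ in $\Z$, which shows that $q-1$ divides $q^j-1$ and therefore $\nu_2(q^j-1)\ge\nu_2(q-1)$ for every $j\ge 1$. Consequently, if $\nu_2(t)<\nu_2(q-1)$, then $\nu_2(q^j-1)\ge\nu_2(q-1)>\nu_2(t)$ for all $j\ge 1$, and together with the trivial $j=0$ case this re-establishes the full condition \eqref{t,s,duadic} of Lemma~\ref{l:t,s,duadic}, so the splittings exist.

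There is essentially no serious obstacle here: the only point that needs a moment's care is the $j=0$ term, where one must invoke the convention $\nu_2(0)=\infty$ rather than attempting to treat $\nu_2(q^j-s)$ as the valuation of a nonzero integer; after that, everything rests on the divisibility $q-1\mid q^j-1$. If one wanted the sharper fact that $\nu_2(q-1)=\min_{j\ge 1}\nu_2(q^j-1)$, one could appeal to the lifting-the-exponent lemma, but for this corollary only the one-sided bound $\nu_2(q^j-1)\ge\nu_2(q-1)$ is needed.
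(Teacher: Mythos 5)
Your proposal is correct and follows essentially the same route as the paper: specialize Lemma~\ref{l:t,s,duadic} to $s=1$, take $j=1$ for necessity, and use the divisibility $(q-1)\mid(q^j-1)$ for sufficiency. Your explicit handling of the $j=0$ term via the convention $\nu_2(0)=\infty$ is a small extra care the paper glosses over (it is also covered by $(q-1)\mid 0$), but it changes nothing substantive.
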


\begin{proof}
If the Type-I duadic splittings of $\Z_{2^v}$ given by $\tau_{t}$ exist,
by Lemma~\ref{l:t,s,duadic},
$\nu_2(t)<\nu_2(q^j-1)$ for any $j\ge 0$. Taking $j=1$, we obtain
$\nu_2(t)<\nu_2(q-1)$.

Conversely, assume that $\nu_2(t)<\nu_2(q-1)$.
Since $(q-1)\mid(q^j-1)$ for any $j\ge 0$,
we get $\nu_2(t)<\nu_2(q^j-1)$ for any $j\ge 0$.
By Lemma \ref{l:t,s,duadic},
the Type-I duadic splittings of $\Z_{2^v}$ given by $\tau_{t}$ exist.
\end{proof}

\begin{lemma}\label{l:s,duadic}
Let $n=2^v$ with $v\ge 1$, and $s\in\Z_{2^v}^*$.
There is a $t\in\Z_{2^v}$ with $qt\equiv t\!\pmod {2^v}$
such that Type-I duadic splittings of $\Z_{2^v}$ given by $\P_{s,t}$
exist if and only if
\begin{equation}\label{s,duadic}
 \nu_2(q^j-s)+\nu_2(q-1)>v, ~~~~~~~~ \forall~j\ge 0.
\end{equation}
\end{lemma}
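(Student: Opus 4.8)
The plan is to reduce the statement, via Lemma~\ref{l:t,s,duadic} and Remark~\ref{nu}(iii), to an elementary inequality about $2$-adic valuations. First, by Remark~\ref{nu}(ii) the value $t\equiv 0$ never produces a Type-I duadic splitting, so we only need to search among $t$ with $\nu_2(t)<v$; for such $t$ the requirement $qt\equiv t\pmod{2^v}$ is, by Remark~\ref{nu}(iii), the same as $\nu_2(t)\ge v-\nu_2(q-1)$. Moreover, all three constraints on $t$ (namely $t\not\equiv 0$, the congruence, and the condition of Lemma~\ref{l:t,s,duadic}) depend on $t$ only through $\nu_2(t)$, and as $t$ runs over the nonzero elements of $\Z_{2^v}$ the quantity $\nu_2(t)$ takes each value in $\{0,1,\dots,v-1\}$. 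Hence a $t$ of the desired kind exists if and only if there is an integer $a$ with
$$
\max\{0,\,v-\nu_2(q-1)\}\;\le\;a\;\le\;v-1
\qquad\text{and}\qquad
a<\nu_2(q^j-s)\ \text{ for all }j\ge0,
$$
the last condition coming from Lemma~\ref{l:t,s,duadic}.

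Next I would analyse this interval of admissible $a$. Put $M:=\min_{j\ge0}\nu_2(q^j-s)$, with the understanding that a $j$ for which $q^j\equiv s\pmod{2^v}$ contributes the value $v$ (this is harmless: since $q$ is odd, $\nu_2(q-1)\ge1$, so such a $j$ never obstructs the inequalities below; and $\nu_2(q^j-s)$ is a difference of two odd integers, so $M\ge1$). Since $a<v$ is already imposed, ``$a<\nu_2(q^j-s)$ for all $j$'' is equivalent to $a\le M-1$, so a valid $a$ exists if and only if $\max\{0,\,v-\nu_2(q-1)\}\le\min\{v-1,\,M-1\}$. Using the trivial bounds $v\ge1$, $M\ge1$ and $\nu_2(q-1)\ge1$, three of the four inequalities implicit in this ``$\max\le\min$'' are automatic, and the surviving one, $v-\nu_2(q-1)\le M-1$, is exactly $M+\nu_2(q-1)>v$. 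Finally, since each $\nu_2(q^j-s)\ge M$, the statement $M+\nu_2(q-1)>v$ is precisely the conjunction over all $j\ge0$ of $\nu_2(q^j-s)+\nu_2(q-1)>v$, i.e.\ of~\eqref{s,duadic}; for the forward implication one can exhibit the witness explicitly by taking $t=2^{\max\{0,\,v-\nu_2(q-1)\}}$ and checking $qt\equiv t\pmod{2^v}$ and $\nu_2(t)<M$ directly.

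The whole argument is short once Lemma~\ref{l:t,s,duadic} is in hand; the only delicate point is the bookkeeping with $2$-adic valuations — in particular the (capped) treatment of indices $j$ with $q^j\equiv s\pmod{2^v}$, for which $\nu_2(q^j-s)$ is not well defined as an element of $\Z_{2^v}$ — together with making sure the small bounds $\nu_2(q-1)\ge1$, $M\ge1$, $v\ge1$ are all invoked so that the admissible interval for $a$ is nonempty exactly under~\eqref{s,duadic}. I do not anticipate any genuine obstacle beyond this.
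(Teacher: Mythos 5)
Your proof is correct and follows essentially the same route as the paper's: both reduce, via Lemma~\ref{l:t,s,duadic} and Remark~\ref{nu}(iii), to the question of whether some admissible value of $\nu_2(t)$ (namely one with $\max\{0,v-\nu_2(q-1)\}\le\nu_2(t)<v$) lies strictly below every $\nu_2(q^j-s)$, and both use the witness $\nu_2(t)=\max\{0,\,v-\nu_2(q-1)\}$. The paper presents this as a case split on whether $\nu_2(q-1)\ge v$, preceded by a check that \eqref{s,duadic} is independent of the integer representative of $s$ (which your capping of $\nu_2(q^j-s)$ at $v$ handles implicitly), whereas you package it as a single interval-nonemptiness computation; the mathematical content is the same.
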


\begin{proof}
First we need to show that the inequality \eqref{s,duadic}
is independent of the choice of the integer $s$ from its residue class modulo $2^v$.
Suppose that $s\in\Z_{2^v}^*$ satisfies the inequality \eqref{s,duadic},
i.e., $\nu_2(q^j-s)>v-\nu_2(q-1)$.
For any $s'\equiv s\!\pmod {2^v}$ we can write $s'=s+2^vk$.
Since $\nu_2(q-1)>0$ (recall that $q$ is odd), we get
$$
\nu_2(q^j-s')=\nu_2(q^j-s-2^vk)
\ge \min\{\nu_2(q^j-s),v\}>v-\nu_2(q-1).
$$
That is, $s'$ still satisfies the inequality \eqref{s,duadic}.

If $\nu_2(q-1)\ge v$, by the condition ``$\nu_2(q-1)+\nu_2(t)\ge v$''
(see Remark~\ref{nu}(iii)),
we can take a $t\in\Z_{2^v}$ such that $\nu_2(t)=0$, i.e., $t$ is odd.
Since both $q$ and $s$ are odd, for any $j\ge 0$,
$\nu_2(q^j-s)\ge 1 >\nu_2(t)$; hence
the condition \eqref{s,duadic} is satisfied, and
by Lemma \ref{l:t,s,duadic}, Type-I duadic splittings of $\Z_{2^v}$
given by $\P_{s,t}$ exist. In a word, the conclusion of the lemma holds in
this case.

Next, we assume that $\nu_2(q-1)<v$, i.e.,
$0<v-\nu_2(q-1)<v$.

Suppose that the condition \eqref{s,duadic} is satisfied.
Since $\nu_2(q-1)<v$, we can take a $t\in\Z_{2^v}$
such that $\nu_2(t)=v-\nu_2(q-1)$, see Remark~\ref{nu}(iii). Then
$$
\nu_2(q^j-s)>v-\nu_2(q-1)=\nu_2(t), ~~~~~~~~ \forall~ j\ge 0;
$$
by Lemma \ref{l:t,s,duadic}, the Type-I duadic splittings of $\Z_{2^v}$
given by $\P_{s,t}$ exist.

Suppose that the condition \eqref{s,duadic} is not satisfied,
i.e., there is a $j\ge 0$ such that $\nu_2(q^j-s)\le v-\nu_2(q-1)$.
For any $\P_{s,t}\in G_{q,n}$, by Remark~\ref{nu}(iii),
$\nu_2(q-1)+\nu_2(t)\ge v$.
So, $v-\nu_2(q-1)\le\nu_2(t)<v$. Then $\nu_2(q^j-s)\le v-\nu_2(q-1)\le\nu_2(t)$.
By Lemma \ref{l:t,s,duadic},
Type-I duadic splittings of $\Z_{2^v}$ given by $\P_{s,t}$
do not exist.
\end{proof}

\begin{lemma}\label{l:t,duadic}
Let $n=2^v$ with $v\ge 1$, let $\P_{s,t}\in G_{q,n}$.
If the Type-I duadic splittings of $\Z_{2^v}$ given by $\P_{s,t}$ exist,
then the Type-I duadic splittings of $\Z_{2^v}$ given by $\P_{1,t}=\tau_t$ exist.
\end{lemma}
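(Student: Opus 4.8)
The plan is to reduce the claim entirely to the two numerical criteria already established, Lemma \ref{l:t,s,duadic} and Corollary \ref{c:t,duadic}, after which only a one-line $2$-adic estimate remains.

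First I would dispose of the degenerate case $t\equiv 0\pmod{2^v}$. In that case $\P_{s,t}=\mu_s$ is a multiplier, and by Remark \ref{nu}(ii) no Type-I duadic splitting of $\Z_{2^v}$ is given by a multiplier; hence the hypothesis of the lemma is vacuous and there is nothing to prove. So I may assume $t\not\equiv 0\pmod{2^v}$, i.e.\ $\nu_2(t)<v$. I would also note that the membership $\P_{s,t}\in G_{q,2^v}$ already encodes the congruence $qt\equiv t\pmod{2^v}$, which involves only $t$, so $\tau_t=\P_{1,t}$ likewise belongs to $G_{q,2^v}$ and Corollary \ref{c:t,duadic} is applicable to it.

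Next, assuming Type-I duadic splittings of $\Z_{2^v}$ given by $\P_{s,t}$ exist, Lemma \ref{l:t,s,duadic} gives $\nu_2(q^j-s)>\nu_2(t)$ for every $j\ge 0$. Taking $j=0$ and $j=1$ yields $\nu_2(s-1)>\nu_2(t)$ and $\nu_2(q-s)>\nu_2(t)$. Writing $q-1=(q-s)+(s-1)$ and applying the ultrametric property of $\nu_2$,
$$\nu_2(q-1)\ \ge\ \min\{\nu_2(q-s),\,\nu_2(s-1)\}\ >\ \nu_2(t),$$
so $\nu_2(t)<\nu_2(q-1)$. Combined with $\nu_2(t)<v$, Corollary \ref{c:t,duadic} then shows that the Type-I duadic splittings of $\Z_{2^v}$ given by $\tau_t$ exist, finishing the argument.

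There is no real obstacle in this proof; the only points requiring care are the initial reduction to $t\not\equiv 0$ (so that Corollary \ref{c:t,duadic} genuinely applies), and the implicit fact --- already verified inside the proof of Lemma \ref{l:t,s,duadic} --- that the valuation conditions do not depend on the representatives chosen for $s$ and $t$ modulo $2^v$.
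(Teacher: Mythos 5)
Your proposal is correct and follows essentially the same route as the paper: both extract the $j=0$ and $j=1$ instances of condition \eqref{t,s,duadic}, combine them via the decomposition $q-1=(q-s)+(s-1)$ to get $\nu_2(q-1)>\nu_2(t)$ (your single ultrametric inequality replaces the paper's two-case analysis, and absorbs the paper's separate treatment of $s\equiv 1$ via the convention $\nu_2(0)=\infty$), and then conclude via $(q-1)\mid(q^j-1)$, which you delegate to Corollary \ref{c:t,duadic}. Your explicit disposal of the degenerate case $t\equiv 0\pmod{2^v}$ via Remark \ref{nu}(ii) is a small but welcome addition that the paper leaves implicit.
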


\begin{proof}
By Lemma~\ref{l:t,s,duadic}, for $\P_{s,t}$
the condition \eqref{t,s,duadic} is satisfied.
If $s\equiv 1\!\pmod{2^v}$, then we are done.
In the following we assume that
$s\,{\not\equiv}\, 1\!\pmod{2^v}$.
For $j=0$, from the condition \eqref{t,s,duadic} we get
$$
 \nu_2(1-s)>\nu_2(t).
$$
Taking $j=1$ we further get $\nu_2(q-s)>\nu_2(t)$,
which we can rewrite as follows:
$$
 \nu_2\big((q-1)+(1-s)\big)>\nu_2(t).
$$
If $\nu_2(q-1)<\nu_2(1-s)$, then
$$
\nu_2(q-1)= \nu_2\big((q-1)+(1-s)\big)>\nu_2(t).
$$
Otherwise, $\nu_2(q-1)\ge \nu_2(1-s)$;
since $\nu_2(1-s)>\nu_2(t)$, we still get
$$
\nu_2(q-1)>\nu_2(t).
$$
For any $j\ge 0$, noting that $(q-1)\mid(q^j-1)$, we obtain that
$$
\nu_2(q^j-1)\ge \nu_2(q-1)>\nu_2(t).
$$
By Lemma \ref{l:t,s,duadic},
the Type-I duadic splittings of $\Z_{2^v}$ given by $\P_{1,t}$ exist.
\end{proof}

\begin{theorem}\label{t:duadic}
Let $n=2^v$. %with $v\ge 1$.
Then the following three statements
are equivalent to each other:
\begin{itemize}
\item[\rm(i)]
The Type-I duadic splittings of $\Z_{2^v}$ exist.
\item[\rm(ii)]
 The Type-I duadic splittings of $\Z_{2^v}$ given by $q$-translations exist.
\item[\rm(iii)]
 $0<v<2\cdot \nu_2(q-1)$.
\end{itemize}
If it is the case, then there is an integer $u$ such that
\begin{equation}\label{e:1,t}
\max\{0,~v-\nu_2(q-1)\}\le u <\min\{v,~\nu_2(q-1)\};
% 0\le u<v ~~\mbox{and}~~ v-\nu_2(q-1)\le u<\nu_2(q-1),
\end{equation}
hence $\P_{1,2^u}=\tau_{2^u}\in G_{q,2^v}$ is a $q$-translation
and the Type-I duadic splittings of $\Z_{2^v}$
given by $\tau_{2^u}$ exist.
\end{theorem}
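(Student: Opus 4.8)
The plan is to close the loop (ii)$\Rightarrow$(i)$\Rightarrow$(ii) together with (ii)$\Leftrightarrow$(iii), and to read off the integer $u$ of \eqref{e:1,t} while verifying (iii)$\Rightarrow$(ii). I keep the standing hypothesis $v\ge 1$ of Remark \ref{nu}(iii); the case $v=0$ is degenerate, since $\Z_1$ has odd cardinality $1$ and all three statements fail. The implication (ii)$\Rightarrow$(i) is immediate, as every $q$-translation $\tau_t=\P_{1,t}$ lies in $G_{q,2^v}$. For (i)$\Rightarrow$(ii): if the Type-I duadic splittings of $\Z_{2^v}$ given by some $\P_{s,t}\in G_{q,2^v}$ exist, then Lemma \ref{l:t,duadic} immediately provides the Type-I duadic splittings of $\Z_{2^v}$ given by $\P_{1,t}=\tau_t$ (and necessarily $t\not\equiv 0\pmod{2^v}$, since by Remark \ref{nu}(ii) no multiplier gives a splitting), which is precisely statement (ii).

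For (ii)$\Rightarrow$(iii), suppose the Type-I duadic splittings of $\Z_{2^v}$ given by a $q$-translation $\tau_t$ exist. Then $t\not\equiv 0\pmod{2^v}$ by Remark \ref{nu}(ii), so $\nu_2(t)<v$; the membership $\tau_t\in G_{q,2^v}$ is the condition $qt\equiv t\pmod{2^v}$, i.e.\ $v\le\nu_2(q-1)+\nu_2(t)$ by Remark \ref{nu}(iii); and Corollary \ref{c:t,duadic}, whose hypothesis $\nu_2(t)<v$ has just been checked, forces $\nu_2(t)<\nu_2(q-1)$. Chaining these, $v\le\nu_2(q-1)+\nu_2(t)<2\,\nu_2(q-1)$, and with $v\ge 1$ this is statement (iii).

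For (iii)$\Rightarrow$(ii) and the final assertion, I assume $0<v<2\,\nu_2(q-1)$ and set $a=\max\{0,\,v-\nu_2(q-1)\}$ and $b=\min\{v,\,\nu_2(q-1)\}$. First I check $a<b$: since $q$ is odd, $\nu_2(q-1)\ge1$, whence $0<b$, $v-\nu_2(q-1)<v$, and $v-\nu_2(q-1)<\nu_2(q-1)$ (this last inequality being exactly hypothesis (iii)); as $a$ and $b$ are integers with $a<b$, the value $u=a$ already satisfies \eqref{e:1,t}. Taking $t=2^u$, so that $\nu_2(t)=u$, the inequality $u\ge v-\nu_2(q-1)$ rearranges to $\nu_2(q-1)+\nu_2(t)\ge v$, i.e.\ $qt\equiv t\pmod{2^v}$, so $\tau_{2^u}\in G_{q,2^v}$; and $u<b$ gives both $\nu_2(t)<v$ and $\nu_2(t)<\nu_2(q-1)$, so Corollary \ref{c:t,duadic} yields the Type-I duadic splittings of $\Z_{2^v}$ given by $\tau_{2^u}$. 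This establishes (ii) and the closing ``hence'' clause simultaneously. The one step that requires any care is this last construction: one must confirm that the half-open interval in \eqref{e:1,t} is non-empty and that its integer left endpoint is an admissible value of $u$, and then match the three inequalities $u+\nu_2(q-1)\ge v$, $\nu_2(t)<v$, $\nu_2(t)<\nu_2(q-1)$ with, respectively, membership in $G_{q,2^v}$, the applicability hypothesis of Corollary \ref{c:t,duadic}, and its conclusion. Everything else is a direct appeal to Lemma \ref{l:t,duadic}, Corollary \ref{c:t,duadic}, and Remark \ref{nu}.
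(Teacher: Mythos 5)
Your proof is correct and follows essentially the same route as the paper: the equivalence (i)$\Leftrightarrow$(ii) via Lemma \ref{l:t,duadic}, and the construction of $u$ satisfying \eqref{e:1,t} with $t=2^u$ fed into Remark \ref{nu}(iii) and Corollary \ref{c:t,duadic} are exactly the paper's argument. The only (harmless) deviation is in (ii)$\Rightarrow$(iii), where the paper invokes Lemma \ref{l:s,duadic} with $j=1$ while you instead combine the membership condition $v\le\nu_2(q-1)+\nu_2(t)$ with the conclusion $\nu_2(t)<\nu_2(q-1)$ of Corollary \ref{c:t,duadic}; both give $v<2\nu_2(q-1)$ in one line.
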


\begin{proof}
Obviously, any one of the three statements implies that $v\ge 1$.

(i) $\Leftrightarrow$ (ii).~ This follows from Lemma \ref{l:t,duadic} immediately.

(ii) $\Rightarrow$ (iii).~
By Lemma \ref{l:s,duadic}, (ii) implies that
\begin{equation*}%\label{1,duadic}
 \nu_2(q^j-1)+\nu_2(q-1)> v, ~~~~~~ \forall~ j\ge 0.
\end{equation*}
Taking $j=1$,
we get $\nu_2(q-1)+\nu_2(q-1)> v$; i.e., (iii) holds.

(iii) $\Rightarrow$ (ii).~
Assume that (iii) holds, i.e., $0<v$ and $v-\nu_2(q-1)<\nu_2(q-1)$.
Since $q$ is odd, $0<\nu_2(q-1)$ and $v-\nu_2(q-1)<v$.
Hence
$$
 0\le\max\{0,~v-\nu_2(q-1)\} <\min\{v,~\nu_2(q-1)\}.
$$
Thus the integer $u$ satisfying Eqn~\eqref{e:1,t} exists.
Set $t=2^u$, then $\nu_2(t)=u$.
Because $v-\nu_2(q-1)\le u=\nu_2(t)$, by Remark~\ref{nu} (iii),
$\tau_t=\P_{1,t}\in G_{q,2^v}$ is a $q$-translation.
Since Eqn \eqref{e:1,t} implies that $\nu_2(t)<\nu_2(q-1)$,
by Corollary~\ref{c:t,duadic},
Type-I duadic splittings of $\Z_{2^v}$ given by $\tau_t$ exist.
\end{proof}

The following lemma is a bridge from the case where $n=2^v$ to
the general case where $n$ is any positive integer coprime to $q$.

\begin{lemma}\label{reduce to 2}
Let $v=\nu_2(n)$ and $n=2^{v}n'$ with $2\nmid n'$,
let $\P_{s,t}\in G_{q,n}$.
The following two statements are equivalent to each other:

(i)~ The Type-I duadic splittings of $\Z_n$ given by $\P_{s,t}$ exist.

(ii)~ The Type-I duadic splittings of $\Z_{2^{v}}$ given by $\P_{s,t}$ exist.
\end{lemma}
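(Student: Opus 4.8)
The plan is to work through the Chinese Remainder isomorphism. Since $\gcd(2^{v},n')=1$, CRT gives a ring isomorphism $\Z_n\cong\Z_{2^{v}}\times\Z_{n'}$; under it multiplication by $q$ becomes the diagonal multiplier $(\mu_q,\mu_q)$, and $\P_{s,t}$ becomes a pair $(\P',\P'')$, where $\P'$ is $\P_{s,t}$ read on $\Z_{2^{v}}$ (i.e. $\P_{s,t}$ with $s,t$ reduced modulo $2^{v}$) and $\P''$ is $\P_{s,t}$ read on $\Z_{n'}$; the condition $qt\equiv t\pmod n$ survives each reduction, so $\P'\in G_{q,2^{v}}$ and $\P''\in G_{q,n'}$ are legitimate. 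Since $\langle\mu_q\rangle$ is central, all of this descends to $q$-coset sets; in particular reduction modulo $2^{v}$ induces a surjection $\pi\colon\Z_n/\mu_q\to\Z_{2^{v}}/\mu_q$ that intertwines $\P_{s,t}$ with $\P'$. Throughout I would invoke the orbit criterion of Remark~\ref{r:G action}(i): Type-I duadic splittings of a set given by a permutation exist precisely when every orbit of that permutation is of even length. So the job is to compare the orbit lengths of $\P_{s,t}$ on $\Z_n/\mu_q$ with those of $\P'$ on $\Z_{2^{v}}/\mu_q$.

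For the implication (ii)$\Rightarrow$(i) I would argue through $\pi$. Each $\P_{s,t}$-orbit $O$ on $\Z_n/\mu_q$ is carried by $\pi$ onto a single $\P'$-orbit $\pi(O)$, and as a surjection of orbits it has fibres of constant size, so $|\pi(O)|$ divides $|O|$. If (ii) holds, every $\P'$-orbit on $\Z_{2^{v}}/\mu_q$ is even; hence every $|O|$ is a multiple of an even number, hence even, and (i) follows. I expect this half to be entirely routine.

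For (i)$\Rightarrow$(ii) the tempting move --- projecting a splitting $\Z_n=P\sqcup\P_{s,t}(P)$ down to $\Z_{2^{v}}$ --- does not work, because the $\pi$-image of a partition is only a covering. The substitute I would use is a slice/parity count. Identify $\Z_n=\Z_{2^{v}}\times\Z_{n'}$; for $a\in\Z_{2^{v}}$ set $P_a=\{b\in\Z_{n'}:(a,b)\in P\}$ and $f(a)=|P_a|$. Reading the partition one column $\{a\}\times\Z_{n'}$ at a time gives $f(a)+f(\P'(a))=n'$ for all $a$; and $\mu_q$-invariance of $P$ (with $\mu_q$ acting as $(\mu_q,\mu_q)$) gives $P_{qa}=qP_a$, so $f$ is constant on $q$-cosets of $\Z_{2^{v}}$ and descends to $\bar f$ on $\Z_{2^{v}}/\mu_q$ satisfying $\bar f(Q)+\bar f(\P'(Q))=n'$. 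Now if some $\P'$-orbit $(Q_0,Q_1,\dots,Q_{\ell-1})$ on $\Z_{2^{v}}/\mu_q$ had odd length $\ell$, then walking around it $\bar f$ would alternate between $\bar f(Q_0)$ and $n'-\bar f(Q_0)$, and closing an odd cycle forces $2\bar f(Q_0)=n'$, which is impossible because $n'$ is odd. Hence every $\P'$-orbit on $\Z_{2^{v}}/\mu_q$ is even, which by Remark~\ref{r:G action}(i) is exactly (ii).

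The main obstacle is this last implication, and the one genuinely new ingredient is the parity trick: the relation $\bar f(Q)+\bar f(\P'(Q))=n'$ together with an odd orbit would force $2\bar f=n'$, contradicting the oddness of $n'$. The remaining points --- that CRT really sends $\P_{s,t}$ and $\mu_q$ to the stated pairs and that the residues of $s,t$ give bona fide elements of $G_{q,2^{v}}$ and $G_{q,n'}$, and the column-by-column bookkeeping yielding $f(a)+f(\P'(a))=n'$ --- are routine verifications.
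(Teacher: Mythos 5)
Your proof is correct, and the second half takes a genuinely different route from the paper's. The paper handles both directions at once: after splitting $\Z_n\cong\Z_{n'}\times\Z_{2^v}$ by CRT, it observes that $\Z_{n'}/\mu_q$ must contain an odd $\P_{s,t}$-orbit (because $n'$ is odd, and each orbit is a union of equal-sized $q$-cosets) and then invokes the product-orbit formula of Remark~\ref{r:G action}(iii) --- the length of an orbit on a product set is the lcm of the lengths on the factors --- to conclude that all orbits on $\Z_n/\mu_q$ are even iff all orbits on $\Z_{2^v}/\mu_q$ are. Your (ii)$\Rightarrow$(i) is essentially the easy half of that formula: orbit lengths downstairs divide orbit lengths upstairs, via your equivariant surjection $\pi$. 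Your (i)$\Rightarrow$(ii), however, replaces the lcm formula by the slice-counting identity $\bar f(Q)+\bar f(\P'(Q))=n'$ together with the parity obstruction $2\bar f(Q_0)=n'$ around an odd cycle. This is a real gain in robustness: $\Z_n/\mu_q$ is \emph{not} literally the product $(\Z_{n'}/\mu_q)\times(\Z_{2^v}/\mu_q)$, since $\mu_q$ acts diagonally and the $q$-coset of $(k',k'')$ has cardinality $\mathrm{lcm}(|Q_{k'}|,|Q_{k''}|)$ rather than the product; so applying Remark~\ref{r:G action}(iii) to orbits on the quotient set, as the paper does, requires an extra justification, whereas your counting argument works directly with the given partition of $\Z_n$ and sidesteps the issue entirely. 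The price is a somewhat longer argument and the routine bookkeeping ($P_{qa}=qP_a$, the column count $f(a)+f(\P'^{-1}(a))=n'$, descent to $\Z_{2^v}/\mu_q$), all of which you have correctly identified and all of which goes through.
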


\begin{proof} By Chinese Remainder Theorem, we have a natural isomorphism
\begin{equation}
 \Z_n\CRT \Z_{n'}\times \Z_{2^v}
\end{equation}
which maps any $k\in\Z_{n}$ to
$\big(k~({\rm mod}~n'),\,k~({\rm mod}~2^v)\big) \in\Z_{n'}\times\Z_{2^v}$.
Obviously, $s\in\Z_{n'}^*$ and $s\in\Z_{2^v}^*$;
$qt\equiv t\!\pmod{n'}$ and $qt\equiv t\!\pmod{2^v}$.
That is, we also have the $q$-permutation $\P_{s,t}$ of $\Z_{n'}$ and of $\Z_{2^v}$.
Since the cardinality of $\Z_{n'}$ is $|\Z_{n'}|=n'$ which is odd,
at least one of the $\P_{s,t}$-orbits on $\Z_{n'}/\mu_q$ has odd length.
By Remark \ref{r:G action}~(iii),
the length of any $\P_{s,t}$-orbit on $\Z_{n}/\mu_q$ is even
if and only if the length of any $\P_{s,t}$-orbit on $\Z_{2^v}/\mu_q$ is even.
By Remark \ref{r:G action} (i),
the statements (i) and (ii) are equivalent to each other.
\end{proof}

By Lemma \ref{reduce to 2}, all the previous results on $\Z_{2^v}$
also hold for the general case~$\Z_n$. In particular,
from Theorem \ref{t:duadic} we can get the following result.

\begin{theorem}\label{t:n:duadic}
Let $n=2^{\nu_2(n)}n'$ (hence $2\nmid n'$)
be any positive integer coprime to $q$.
Then the following three statements are equivalent to each other:
\begin{itemize}
\item[\rm(i)]
The Type-I duadic splittings of $\Z_{n}$ exist.
\item[\rm(ii)]
The Type-I duadic splittings of $\Z_{n}$ given by $q$-translations exist.
\item[\rm(iii)]
$0<\nu_2(n)<2\cdot\nu_2(q-1)$.
\end{itemize}
If it is the case, then there is an integer $u$ such that
\begin{equation}\label{e:n:1,t}
\max\{0,~\nu_2(n)-\nu_2(q-1)\}\le u <\min\{\nu_2(n),~\nu_2(q-1)\};
% 0\le u<\nu_2(n) ~~\mbox{and}~~
%\nu_2(q-1)>u\ge \nu_2(n)-\nu_2(q-1),
\end{equation}
hence $\P_{1,2^u n'}=\tau_{2^u n'}\in G_{q, n}$ is a $q$-translation
and the Type-I duadic splittings of $\Z_{n}$
given by $\tau_{2^u n'}$ exist.
\end{theorem}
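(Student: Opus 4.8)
The plan is to reduce everything to the prime‑power case $\Z_{2^v}$ with $v=\nu_2(n)$, which is already settled in Theorem~\ref{t:duadic}, using Lemma~\ref{reduce to 2} as the transfer device. I would establish the cycle of implications (i)$\Rightarrow$(iii)$\Rightarrow$(ii)$\Rightarrow$(i) together with the construction of the integer $u$, observing first that statement (iii) here is literally statement (iii) of Theorem~\ref{t:duadic} once one writes $v=\nu_2(n)$, and likewise the range \eqref{e:n:1,t} coincides with the range \eqref{e:1,t}. So the real content is just carrying information back and forth between $\Z_n$ and $\Z_{2^v}$.

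For (i)$\Rightarrow$(iii): if $\Z_n$ admits a Type‑I duadic splitting, it is given by some $\P_{s,t}\in G_{q,n}$; reducing $s,t$ modulo $2^v$ keeps $\P_{s,t}\in G_{q,2^v}$ (exactly as in the proof of Lemma~\ref{reduce to 2}), and by that lemma $\Z_{2^v}$ then admits a Type‑I duadic splitting given by $\P_{s,t}$, so Theorem~\ref{t:duadic}\,(i)$\Rightarrow$(iii) yields $0<v<2\nu_2(q-1)$, which is (iii). The implication (ii)$\Rightarrow$(i) is immediate since a $q$‑translation is a particular $q$‑permutation. The substantive step is (iii)$\Rightarrow$(ii) together with the explicit $u$: assuming $0<\nu_2(n)<2\nu_2(q-1)$, Theorem~\ref{t:duadic} applied with $v=\nu_2(n)$ furnishes an integer $u$ satisfying \eqref{e:n:1,t}; I would then set $t:=2^u n'$ and verify two things. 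First, $\tau_t=\P_{1,t}\in G_{q,n}$, i.e.\ $(q-1)2^u n'\equiv 0\pmod{2^v n'}$; since $n'$ is odd this reduces to $v\le\nu_2(q-1)+u$, which is exactly the lower bound in \eqref{e:n:1,t}. Second, by Lemma~\ref{reduce to 2} the Type‑I duadic splittings of $\Z_n$ given by $\tau_{2^u n'}$ exist if and only if those of $\Z_{2^v}$ given by $\tau_{2^u n'}$ exist; modulo $2^v$ one has $\nu_2(2^u n')=u<v$ and $u<\nu_2(q-1)$ (both from the upper bound in \eqref{e:n:1,t}), so Corollary~\ref{c:t,duadic} applies and produces the splitting of $\Z_{2^v}$. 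Hence $\tau_{2^u n'}$ gives a Type‑I duadic splitting of $\Z_n$, which yields both (ii) and the final assertion of the theorem.

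I do not expect a genuine obstacle here, since Lemma~\ref{reduce to 2}, Theorem~\ref{t:duadic} and Corollary~\ref{c:t,duadic} do all the work; the one place that needs care is the bookkeeping around the odd part $n'$. One must translate by $2^u n'$ rather than by $2^u$ precisely so that $\tau_{2^u n'}$ satisfies the defining condition $qt\equiv t\pmod n$ of $G_{q,n}$ — the odd factor $n'$ is forced by this congruence — while noting that modulo $2^v$ the factor $n'$ is a unit, so $\nu_2(2^u n')=u$ and the hypotheses of Corollary~\ref{c:t,duadic} are met with the very same $u$ supplied by Theorem~\ref{t:duadic}. Keeping the two moduli $n$ and $2^v$ separate, and checking that the chosen $u$ simultaneously clears the lower bound (membership in $G_{q,n}$) and the upper bound (the splitting criterion on $\Z_{2^v}$), is the only delicate point.
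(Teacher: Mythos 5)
Your proposal is correct and follows essentially the same route as the paper: the implication cycle (i)$\Rightarrow$(iii)$\Rightarrow$(ii)$\Rightarrow$(i), the transfer between $\Z_n$ and $\Z_{2^{\nu_2(n)}}$ via Lemma~\ref{reduce to 2}, the appeal to Theorem~\ref{t:duadic} for the integer $u$, and the verification that $t=2^u n'$ both lies in $G_{q,n}$ (lower bound) and satisfies Corollary~\ref{c:t,duadic} (upper bound). The bookkeeping around the odd part $n'$ that you flag as the delicate point is exactly the content of the paper's argument.
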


\begin{proof} Let $v=\nu_2(n)$, i.e., $n=2^v n'$ with $2\nmid n'$.

(i) $\Rightarrow$ (iii).~
By (i) we assume that $\P_{s,t}\in G_{q,n}$
such that the Type-I duadic splittings of $\Z_n$ given by $\P_{s,t}$ exist.
By Lemma \ref{reduce to 2},
the Type-I duadic splittings of $\Z_{2^v}$ given by $\P_{s,t}$ exist.
By Theorem \ref{t:duadic}, $0<v<2\cdot\nu_2(q-1)$; i.e., (iii) holds.

(iii) $\Rightarrow$ (ii).~
By Theorem \ref{t:duadic}, (iii) implies that there is an integer $u$
which satisfies Eqn \eqref{e:n:1,t}.
%and $\tau_{1,2^u}$ is a $q$-translation on $\Z_{2^v}$ such that
%the Type-I duadic splittings of $\Z_{2^v}$ given by $\tau_{2^u}$ exisit.
%Let $t=2^un'\in\Z_n$, then $\nu_2(t)=u$.
It is obvious~that $q\cdot 2^un'\equiv 2^un'\!\pmod{n'}$.
By Eqn \eqref{e:n:1,t}, $v-\nu_2(q-1)\le u=\nu_2(2^un')$, i.e.,
$q\cdot 2^un'\equiv 2^un'\!\pmod {2^v}$, cf. Remark~\ref{nu}~(iii).
So $q\cdot 2^un'\equiv 2^un'\!\pmod{2^v n'}$ as $2^v$ and $n'$ are coprime.
Recalling that $n=2^v n'$, we get a $q$-translation
$\P_{1,2^v n'}=\tau_{2^un'}\in G_{q,n}$ of $\Z_n$.
By Eqn \eqref{e:n:1,t} again, $\nu_2(2^u n')=u<\nu_2(q-1)$.
By Corollary \ref{c:t,duadic}, the Type-I duadic splitiings
of $\Z_{2^v}$ given by $\tau_{2^u n'}$ exist.
By Lemma \ref{reduce to 2}, the Type-I duadic splitiings
of $\Z_{n}$ given by $\P_{1,2^v n'}=\tau_{2^u n'}$ exist.

(ii) $\Rightarrow$ (i).~ Trivial.
\end{proof}

\section{Constructions of iso-self-dual cyclic codes}

\subsection{A program of construction of iso-self-dual cyclic codes}

From Theorem \ref{t:n:duadic} and Theorem \ref{t:dual splitting},
we get immediately a necessary and sufficient condition for the existence
of iso-self-dual cyclic codes.

\begin{theorem}\label{existence iso-self-dual}
Iso-self-dual cyclic codes over $\F_q$ of length $n$ exist
if and only if $0<\nu_2(n)<2\cdot\nu_2(q-1)$.
\end{theorem}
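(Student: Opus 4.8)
The plan is to assemble this theorem directly from the two pillars already established: Theorem~\ref{t:dual splitting}, which translates the existence of an iso-self-dual cyclic code into a purely combinatorial statement about $\Z_n$, and Theorem~\ref{t:n:duadic}, which settles that combinatorial statement arithmetically. So essentially no new work is needed; the proof will be a chaining of equivalences.

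First I would recall from Remark~\ref{r:known cyclic}(iv) that $P\mapsto C_P$ is a bijection between the $\mu_q$-invariant subsets of $\Z_n$ and the cyclic codes of $R_n$; hence ``an iso-self-dual cyclic code of length $n$ over $\F_q$ exists'' is the same assertion as ``there is a $\mu_q$-invariant $P\subseteq\Z_n$ for which $C_P$ is iso-self-dual''. Next, applying Theorem~\ref{t:dual splitting} to each such $P$, the code $C_P$ is iso-self-dual exactly when there exist $s\in\Z_n^*$ and $t\in\Z_n$ with $qt\equiv t\!\pmod n$ such that $\Z_n=P\bigcup\P_{s,t}(P)$ is a partition, i.e.\ exactly when $P,\P_{s,t}(P)$ form a Type-I duadic splitting of $\Z_n$ in the sense of Definition~\ref{splitting}. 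Quantifying existentially over $P$ (and over $s,t$) then yields: iso-self-dual cyclic codes of length $n$ over $\F_q$ exist if and only if Type-I duadic splittings of $\Z_n$ exist.

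Finally I would invoke Theorem~\ref{t:n:duadic}, whose equivalence (i)$\Leftrightarrow$(iii) states precisely that Type-I duadic splittings of $\Z_n$ exist if and only if $0<\nu_2(n)<2\cdot\nu_2(q-1)$. Composing this with the previous equivalence gives the claimed criterion.

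Since the substantive arguments---the classification of isometries of $R_n$ in Section~2 and the $2$-adic valuation analysis of Section~3 culminating in Theorem~\ref{t:n:duadic}---are already in hand, there is no real obstacle remaining; the only point demanding a little care is the bookkeeping of quantifiers, namely promoting the ``for a fixed $P$'' formulation of Theorem~\ref{t:dual splitting} to an existence statement ranging over all $P$, which is legitimate precisely because every cyclic code of $R_n$ is some $C_P$. I would also remark that the standing hypothesis $\gcd(q,n)=1$ remains in force, so the condition $0<\nu_2(n)<2\cdot\nu_2(q-1)$ is to be understood alongside it (in particular it forces $n$ even and $q$ odd).
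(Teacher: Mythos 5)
Your proposal is correct and follows exactly the route the paper takes: the paper derives this theorem immediately by combining Theorem~\ref{t:dual splitting} (iso-self-dual codes $\Leftrightarrow$ Type-I duadic splittings of $\Z_n$) with Theorem~\ref{t:n:duadic} (such splittings exist $\Leftrightarrow$ $0<\nu_2(n)<2\cdot\nu_2(q-1)$). Your extra care about quantifying over all $\mu_q$-invariant $P$ via Remark~\ref{r:known cyclic}(iv) is a sound and slightly more explicit rendering of the same argument.
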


\begin{remark}\label{a construction}\rm
More precisely, Theorem \ref{t:n:duadic}, Theorem \ref{t:dual splitting}
and their arguments provide a program to construct iso-self-dual cyclic codes
over $\F_q$ of length $n$.
\begin{itemize}
\item[S1.] (determination of existence of iso-self-dual cyclic codes)

 if $\nu_2(n)=0$ or $\nu_2(n)\ge 2\cdot\nu_2(q-1)$,
then output
``{\tt iso-self-dual cyclic codes over $\F_q$ of length $n$ do not exist}'';

otherwise take an integer $u$ satisfying that
$$
\max\{0,~\nu_2(n)-\nu_2(q-1)\}\le u <\min\{\nu_2(n),~\nu_2(q-1)\},
$$
set $n'=n/2^{\nu_2(n)}$, $t=2^u n'$ and go to S2.
\item[S2.]
(construction of iso-self-dual cyclic codes)
\begin{itemize}
\item[1)]
compute $q$-cosets on $\Z_n$ to get the quotient set $\Z_n/\mu_q$;
\item[2)]
compute the $\tau_{t}$-orbits on $\Z_n/\mu_q$;
\item[3)]
for every $\tau_{t}$-orbit $(Q_1,Q_2,\cdots,Q_{2m})$
on $\Z_n/\mu_q$
(the length of any $\tau_{t}$-orbit must be even),
place  $Q_1,Q_3,\cdots,Q_{2m-1}$ into the subset $P$
(cf. Remark \ref{r:G action} (i));
\item[4)]
output the cyclic code $C_P\subseteq R_n$ with check polynomial $f_P(X)$
and the dual code $C_P^\bot=\T_{-1,t}(C_P)=C_{-\bar P}$.
\end{itemize}
\end{itemize}
\end{remark}

\begin{example}\rm
Let $q=3$, $n=8$.
Then $\nu_2(n)=3>2=2\nu_2(q-1)$;
so there are no iso-self-dual cyclic codes over $\F_3$ of length $8$.
In fact, in this case,
$$\Z_8/\mu_3=\big\{\{0\},~ \{4\},~
    \{1,3\},~  \{2,6\},~ \{5,7\}\big\}
$$
which consists of $5$ $q$-cosets; any permutation on $\Z_8/\mu_3$
has at least one orbit of odd length.
\end{example}

\begin{example}\rm
Let $q=5$, $n=8$.
Then $\nu_2(n)=3<4=2\nu_2(q-1)$; the condition
in Theorem \ref{existence iso-self-dual} is satisfied.
We can take $u=1$:
$$
\max\{0,~\nu_2(n)-\nu_2(q-1)\}=1\le 1<2=\min\{\nu_2(n),~\nu_2(q-1)\}.
$$
Then $t=2$. The set of $q$-cosets
$$\Z_8/\mu_5=\big\{\{0\},~ \{4\},~ \{2\},~ \{6\},~
    \{1,5\},~  \{3,7\}\big\} .
$$
The orbits of the $q$-translation $\tau_t=\tau_2$ are as follows:
$$
\big(\{0\},~\{2\},~\{4\},~\{6\}\big),~~~~
\big(\{1,5\},~\{3,7\}\big).
$$
We put $\{0\},~\{4\},~\{1,5\}$ into $P$, and get $P=\{0,1,4,5\}$.
Then $\P_{1,t}(P)=\bar P$ is the complement of $P$ in $\Z_n$.
Take $\theta$ to be a primitive $8$-th root of unity such that $\theta^2=2$.
Then $C_P$ is an iso-self-dual cyclic code over $\F_5$ of length $8$
with the check polynomial $f_P(X)$ as follows (note that $\theta^5=-\theta$):
\begin{eqnarray*}
 f_P(X)&=&(X-\theta^0)(X-\theta^4)(X-\theta)(X-\theta^5)\\
 &=& (X-1)(X+1)(X^2-2)\\
 &=& X^4+2X^2+2.
\end{eqnarray*}
And $C_P^\bot=\T_{-1,t}(C_P)$.
Thus $f_{\P_{-1,t}(P)}(X)$ is a generator polynomial of $C_P$.
By Corollary \ref{c:T1 P}, $f_{\P_{-1,t}(P)}(X)$ can be obtained
from $f_P(X)$ as follows (note that $\theta^{-t}=2^{-1}=3$):
$$
 f_{\P_{1,t}(P)}(X)=\theta^{t|P|}f_P(\theta^{-t}X)
 =X^4-2X^2+2;
$$
and
$$
f_{\P_{-1,t}(P)}(X)=f_{\P_{-1,0}(\P_{1,t}(P))}(X)
=2^{-1}(1-2X^2+2X^4)=X^4-X^2-2.
$$
\end{example}

\subsection{The case where $\nu_2(n)=1$}

For the length $n$, there is a specific case where
the iso-self-dual cyclic codes of length $n$ always exist.

\begin{theorem}\label{t:n=2n'}
Assume that $n$ is an even positive integer such that $n'=n/2$ is odd.
Then, for any odd prime power $q$ coprime to $n$,
the length of any $\tau_{n'}$-orbit
on the quotient set $\Z_n/\mu_q$ of $q$-cosets equals to $2$; in particular,
Type I duadic splittings of $\Z_n$ given by $\tau_{n'}$ exist.
\end{theorem}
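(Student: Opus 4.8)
The plan is to use that $\tau_{n'}$ is an involution of $\Z_n$ together with the fixed‑point criterion of Lemma~\ref{l:Q fixed}: once we know $\tau_{n'}$ fixes no $q$‑coset, every orbit on $\Z_n/\mu_q$ automatically has length exactly $2$, and the splitting drops out of Remark~\ref{r:G action}(i).

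First I would check that $\tau_{n'}=\P_{1,n'}$ really belongs to $G_{q,n}$, i.e. that $q n'\equiv n'\pmod n$. This amounts to $n=2n'$ dividing $(q-1)n'$, which holds because $q$ is odd forces $2\mid q-1$. Next, since $2n'=n$ we have $\tau_{n'}^2=\tau_{2n'}=\tau_0=\mathrm{id}$, so $\tau_{n'}$ is an involution of $\Z_n$ and hence acts as an involution on the quotient set $\Z_n/\mu_q$. Consequently every $\tau_{n'}$‑orbit on $\Z_n/\mu_q$ has length $1$ or $2$, and it has length $1$ precisely when the corresponding $q$‑coset is fixed by $\tau_{n'}$.

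The heart of the argument is to rule out fixed $q$‑cosets. By Lemma~\ref{l:Q fixed} applied with $s=1$ and $t=n'$ (so that $st=n'$), a $q$‑coset $Q$ is fixed by $\tau_{n'}$ if and only if there are $k\in Q$ and an integer $j\ge 0$ with $(q^j-1)k\equiv n'\pmod n$. Reducing this congruence modulo $2$ (which divides $n$) gives the contradiction I am after: $q$ is odd, so $q^j-1$ is even and the left‑hand side is $\equiv 0\pmod 2$, whereas $n'$ is odd. Hence no $q$‑coset is fixed by $\tau_{n'}$, and therefore every $\tau_{n'}$‑orbit on $\Z_n/\mu_q$ has length exactly $2$. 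This parity observation is really the only nontrivial step; everything else is bookkeeping.

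For the ``in particular'' clause, since all $\tau_{n'}$‑orbits on $\Z_n/\mu_q$ have even length, Remark~\ref{r:G action}(i) produces a subset of $\Z_n/\mu_q$ whose union of $q$‑cosets is a $\mu_q$‑invariant subset $P\subseteq\Z_n$ (cf. Remark~\ref{r:known cyclic}(iii)) with $\Z_n=P\cup\tau_{n'}(P)$ a partition, i.e. a Type‑I duadic splitting of $\Z_n$ given by $\tau_{n'}=\P_{1,n'}$. (Alternatively this also follows from Theorem~\ref{t:n:duadic}, since $\nu_2(n)=1$ and $q$ odd give $0<\nu_2(n)<2\nu_2(q-1)$, with $u=0$ and $2^u n'=n'$.) The only point needing a word of care is the identification of a family of $q$‑cosets with a $\mu_q$‑invariant subset of $\Z_n$; I expect no genuine obstacle beyond the parity computation above.
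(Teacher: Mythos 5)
Your proposal is correct and follows essentially the same route as the paper: observe that $\tau_{n'}$ is an involution so orbits on $\Z_n/\mu_q$ have length $1$ or $2$, rule out fixed $q$-cosets via Lemma~\ref{l:Q fixed} by the parity contradiction $(q^j-1)k\equiv n'\pmod{n}$ with $q^j-1$ and $n$ even but $n'$ odd, and then invoke Remark~\ref{r:G action}(i) for the splitting. Your additional check that $\tau_{n'}\in G_{q,n}$ is a harmless (and welcome) extra detail the paper leaves implicit.
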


\begin{proof}
For any $i\in\Z_n$, $\tau_{n'}^2(i)=i+2n'\equiv i\!\pmod n$.
That is, the order of the permutation $\tau_{n'}$ equals to $2$;
Then the length of any $\tau_{n'}$-orbit on $\Z_n/\mu_q$
equals to either $2$ or $1$.
Suppose a $\tau_{n'}$-orbit on $\Z_n/\mu_q$ has length $1$;
i.e., consists of only one $q$-coset $Q$;
then $\tau_{n'}(Q)=Q$ and by Lemma \ref{l:Q fixed}
there is a $k\in Q$ and an integer $j\ge 0$ such that
$(q^j-1)k\equiv n'\!\pmod{n}$; this is impossible because
both $q^j-1$ and $n$ are even but $n'$ is odd.
Therefore, any $\tau_{n'}$-orbit on $\Z_{n}/\mu_q$ has length $2$.
By Remark \ref{a construction}, taking any one $q$-coset
from each $\tau_{n'}$-orbit on $\Z_n/\mu_q$ to put into $P$,
we get a partition $\Z_n=P\bigcup\tau_{n'}(P)$.
\end{proof}

\begin{remark}\label{poly S2}\rm
Let notations be as in Theorem \ref{t:n=2n'}.
In that case we can omit the step S1 in Remark \ref{a construction};
moreover, we have a corresponding polynomial version of the
step S2 in Remark \ref{a construction}. To state it, we first
describe two operations on polynomials, which correspond to
the $q$-permutations $\tau_{n'}$ and $\mu_{-1}$ respectively.

Let $a(X)=X^d +a_{d-1}X^{d-1}+\cdots+a_{0}$
be a monic polynomial of degree $\deg a(X)=d$
with non-zero constant $a_0$.
\begin{itemize}
\item
$\widetilde a(X)=(-1)^{\deg a(X)}a(-X)=X^d-a_{d-1}X^{d-1}+\cdots+(-1)^da_0$,
called the {\em alternating polynomial} of $a(X)$;
\item
$a^*(X)=a_0^{-1}X^{\deg a(X)}a(X^{-1})$,
called the {\em monic reciprocal polynomial} of $a(X)$;
\end{itemize}

For a primitive $n$-th root $\theta$ of unity,
$\theta^{-n'}=-1$. Then, by Corollary~\ref{c:T1 P},
the program S2 in Remark \ref{a construction} can be translated
as follows:

\medskip
S2. (construction of iso-self-dual cyclic codes)
\begin{itemize} \leftskip2em
\item[1)]
decompose $X^n-1$ into a product of monic irreducible factors;
\item[2)]
partition the monic irreducible factors of $X^n-1$ into
pairs such that in every pair
one polynomial is alternating of the other one;
\item[3)]
for every pair of the partition in 2), select one factor from the pair;
\item[4)]
set $h(X)$ to be the product of the factors which are selected in 3);
output the cyclic code $C\subseteq R_n$ with check polynomial $h(X)$,
and the dual code $C^\bot$ with check polynomial ${\widetilde h}^*(X)$.
\end{itemize}
\end{remark}

We illustrate the above construction with an example.

\begin{example}\rm
Let $n=10$, $q=3$. By the above theorem the Type-I duadic splittings
of $\Z_{10}$ given by the $3$-translation $\tau_{5}$ exist. By the program
in Remark~\ref{a construction}, we can get one of them as follows.
The set of $q$-cosets
\begin{equation}\label{quotient set}
 \Z_{10}/\mu_3=\big\{\{0\},~ \{5\},~\{1,3,9,7\},~ \{2,6,8,4\}\big\}.
\end{equation}
The $\tau_{5}$-orbits on $\Z_{10}/\mu_3$ are:
\begin{equation}
 \big(\{0\},~ \{5\}\big),~~~~ \big(\{1,3,9,7\},~ \{2,6,8,4\}\big).
\end{equation}
For every $\tau_5$-orbit, select one $q$-coset as follows:
\begin{equation}
   \{0\},~~~~ \{1,3,9,7\}.
\end{equation}
Place them into the subset $P$ to get
\begin{equation}
 P=\{0,~1,~3,~9=-1,~7=-3\},
\end{equation}
and $\tau_5(P)=\{2,4,5,6,8\}=\bar P$,
\begin{equation}\label{bar P}
 \mu_{-1}\tau_5(P)=-\bar P=\{-2,-4,-5,-6,-8\}=\bar P.
\end{equation}

Corresponding to the equations \eqref{quotient set}--\eqref{bar P},
we list the polynomial versions described in Remark \ref{poly S2} as follows.

The monic irreducible decomposition
$$
 X^n-1=(X-1)(X+1)(X^4+X^3+X^2+X+1)(X^4-X^3+X^2-X+1).
 \eqno(4.1')
$$
The pairwise partition of the irreducible factors
such that in every pair one polynomial
is the alternating polynomial of the other one are:
$$
 \{X\!-\!1,~X\!+\!1\}, ~~\{X^4-X^3+X^2-X+1,~ X^4+X^3+X^2+X+1\}.
 \eqno(4.2')
$$
Select one from each pair as follows:
$$
  X-1, ~~~ X^4-X^3+X^2-X+1.
 \eqno(4.3')
$$
The check polynomial of $C$ is:
$$
 h(X)=(X-1)(X^4-X^3+X^2-X+1)=X^5-2X^4+2X^3-2X^2+2X-1.
 \eqno(4.4')
$$
and $\widetilde h(X)=X^5+2X^4+2X^3+2X^2+2X+1$,
the check polynomial of $C^\bot$ is:
$$
 \widetilde h^*(X)=X^5+2X^4+2X^3+2X^2+2X+1.
 \eqno(4.5')
$$
\end{example}

\subsection{A class of iso-self-dual MDS cyclic codes}

From Theorem \ref{t:n=2n'}, we can further construct a class
of iso-self-dual MDS cyclic codes. Note that
$q\equiv 1\!\pmod 4$ if and only if %is equivalent to that
$\nu_2(q+1)=1$.

\begin{theorem}\label{t:MDS}
Assume that $q\equiv 1\!\pmod 4$, and $n=q+1$. Let $n'={n}/{2}$ and
$$P=\Big\{-\frac{n-2}{4},~%-\frac{n-2}{4}+1,
    ~\cdots,~ -1,~ 0,~ 1,~\cdots,~ \frac{n-2}{4}\Big\}~\subseteq~ {\Bbb Z}_n.$$
Then $P$ is a $\mu_q$-invariant subset of $\Z_n$  and
\begin{itemize}
\item[(i)]
$C_P\subseteq R_n$ %{\F}_q[X]/\langle X^n-1\rangle$
is an iso-self-dual cyclic code, and $\T_{-1,n'}(C_P)=C_P^\bot$;
\item[(ii)]
$C_P$ is an alternant code over ${\F}_q$ of length $n$ by restricting
a generalized Reed-Solomon code over ${\F}_{q^2}$ of length $n$
to the subfield $\F_q$,
in particular, $C_P$ is a $[q+1,\,\frac{q+1}{2},\,\frac{q+3}{2} ]$ MDS code.
\end{itemize}
\end{theorem}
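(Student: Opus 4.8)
The plan is to settle the two combinatorial assertions of the theorem first and then read off the coding-theoretic properties in (ii) from (i).

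First I would record the elementary facts. Since $q\equiv1\pmod4$, the number $\frac{n-2}{4}=\frac{q-1}{4}$ is a nonnegative integer, so $P$ has exactly $2\cdot\frac{q-1}{4}+1=\frac{q+1}{2}=n'$ elements and $n'$ is odd; also $\nu_2(n)=\nu_2(q+1)=1$ while $\nu_2(q-1)\ge2$, so $0<\nu_2(n)<2\nu_2(q-1)$ and we are in the setting of Theorem~\ref{t:n=2n'}. The key remark is that $n=q+1$ forces $q\equiv-1\pmod n$, so the multiplier acts by $\mu_q(i)=qi\equiv-i\pmod n$; since $P=-P$ is symmetric about $0$ this gives $\mu_q(P)=P$, so $P$ is $\mu_q$-invariant and $C_P$ is a cyclic code with $\dim C_P=|P|=\frac{q+1}{2}$. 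For (i) I would then check directly that $\Z_n=P\cup\tau_{n'}(P)$ is a partition (so $s=1$, $t=n'$ in Theorem~\ref{t:dual splitting}): representing elements of $P$ by integers in $[-\frac{q-1}{4},\frac{q-1}{4}]$, for $i,j\in P$ we get $|i-j|\le\frac{q-1}{2}<\frac{q+1}{2}=n'$, hence $i-j\not\equiv\pm n'\pmod n$, so $\tau_{n'}(P)\cap P=\emptyset$; as $|\tau_{n'}(P)|=|P|=n/2$ this forces $\tau_{n'}(P)=\bar P$ and a partition. (Equivalently, Theorem~\ref{t:n=2n'} already says every $\tau_{n'}$-orbit on $\Z_n/\mu_q$ has length $2$, and the inequality above says $P$ picks exactly one $q$-coset from each.) Theorem~\ref{t:dual splitting} with $s=1$, $t=n'$ then gives that $C_P$ is iso-self-dual with $\T_{-1,n'}(C_P)=C_P^\bot$, which is (i).

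For (ii) I would first identify the complement $\bar P=\Z_n\setminus P$: written in residues modulo $n$ it is the block of $\frac{q+1}{2}$ consecutive integers $\bigl\{\frac{q+3}{4},\frac{q+3}{4}+1,\dots,\frac{3q+1}{4}\bigr\}$, and it is the defining (zero) set of $C_P$, whose generator polynomial is $\prod_{i\in\bar P}(X-\theta^i)$. The BCH bound then gives $d(C_P)\ge\frac{q+1}{2}+1=\frac{q+3}{2}$, and the Singleton bound gives $d(C_P)\le n-\dim C_P+1=(q+1)-\frac{q+1}{2}+1=\frac{q+3}{2}$, so $C_P$ is a $[q+1,\frac{q+1}{2},\frac{q+3}{2}]$ MDS code. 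For the alternant description, note $n=q+1$ divides $q^2-1$, so $\theta\in\F_{q^2}$ and $X^n-1$ splits over $\F_{q^2}$; let $D\subseteq\F_{q^2}^n$ be the cyclic code of length $n$ over $\F_{q^2}$ with the same defining set $\bar P$. Because $\bar P$ is a set of $n-\dim D$ consecutive integers, $D$ is a Reed--Solomon code over $\F_{q^2}$, in particular a generalized Reed--Solomon code, and directly from the definitions $C_P=\{c\in\F_q^n: c(\theta^i)=0\text{ for all }i\in\bar P\}=D\cap\F_q^n$ is exactly the restriction of $D$ to the subfield $\F_q$; hence $C_P$ is the claimed alternant code, and since $C_P\subseteq D$ with $\F_q$-dimension $\frac{q+1}{2}=\dim D$ this re-confirms that $C_P$ is MDS.

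The main obstacle is not conceptual: it is the bookkeeping with residues modulo $q+1$ and the fractions $\frac{q-1}{4}$, $\frac{q+3}{4}$, $\frac{3q+1}{4}$ --- chiefly verifying that $\bar P$ really is a block of consecutive residues and that $\tau_{n'}(P)\cap P=\emptyset$ --- together with a precise statement of the standard facts that a cyclic code over $\F_{q^2}$ of length $q+1$ with a consecutive defining set is a GRS code and that $C_P$ is its subfield subcode over $\F_q$.
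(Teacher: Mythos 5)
Your proposal is correct, and for the $\mu_q$-invariance of $P$ (via $q\equiv-1\pmod n$ and the symmetry $P=-P$) and for part (i) (direct verification that $\tau_{n'}(P)\cap P=\emptyset$ by interval estimates, then Theorem~\ref{t:dual splitting} with $s=1$, $t=n'$) it coincides with the paper's argument. The difference is in part (ii). The paper proves the GRS identification from scratch: it writes down the explicit evaluation representation of $\tilde C_P$ over $\F_{q^2}$, with evaluation points $\theta^{-i}$ and column multipliers $\theta^{\ell i}$ where $\ell=\frac{n-2}{4}$, and verifies by a geometric-series computation that each such codeword polynomial $C_a(X)$ vanishes at $\theta^h$ for every $h\in\bar P$; the MDS property is then inherited (somewhat implicitly) from the GRS code. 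You instead observe that $\bar P$ is a block of $\frac{q+1}{2}$ consecutive residues and invoke two standard facts: the BCH bound combined with Singleton to get $d=\frac{q+3}{2}$ directly, and the classical result that a cyclic code over $\F_{q^2}$ of length $n\mid q^2-1$ with a consecutive defining set is a GRS code, so that $C_P$ is its subfield subcode, hence alternant. Both routes are sound; your appeal to the ``consecutive defining set $\Rightarrow$ GRS'' fact is exactly the content the paper chooses to verify by hand (and which yields the concrete multipliers $\theta^{\ell i}$ used in the paper's Example), while your explicit BCH/Singleton step makes the minimum-distance claim more transparent than the paper's treatment. If you wanted your version to match the paper's level of self-containment, the only thing to add would be the explicit evaluation-point/column-multiplier data of the GRS code, which the standard fact guarantees but does not exhibit.
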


\begin{proof} Note that $|P|=\frac{n}{2}=n'$ is odd,
$$
\bar P=\Z_{n}\backslash P=\Big\{\frac{n+2}{4},~
\frac{n+2}{4}+1,~\cdots,~\frac{3n-2}{4}\Big\},
$$
and $q\equiv -1\!\pmod n$.
Then for any $i\in P$, i.e., $-\frac{n-2}{4}\le i\le\frac{n-2}{4}$,
we have $qi\equiv -i\!\pmod n$, hence
%Since $\frac{n+2}{4}\le i\le\frac{3n-2}{4}$,
$$
 \frac{n-2}{4}\ge -i\ge -\frac{n-2}{4};
$$
i.e., $qi\in P$. We get that $P$ is $\mu_q$-invariant.

(i).~ Let $i\in P$, i.e., $-\frac{n-2}{4}\le i\le\frac{n-2}{4}$.
Note that $\tau_{n'}(i)=i+\frac{n}{2}\!\pmod n$,
and $-\frac{n-2}{4}\equiv n-\frac{n-2}{4}=\frac{3n+2}{4}\!\pmod n$.
Then
$$
 \tau_{n'}(i)=i+\frac{n}{2}\ge \frac{n}{2}-\frac{n-2}{4}\ge
  \frac{n+2}{4}>\frac{n-2}{4};
$$
and
$$
 \tau_{n'}(i)=i+\frac{n}{2}
 \le \frac{n}{2}+\frac{n-2}{4}=\frac{3n-2}{4}<\frac{3n+2}{4}.
$$
We obtain that $\tau_{n'}(i)\notin P$.
So $P\bigcap\tau_{n'}(P)=\emptyset$;
and $\Z_n=P\bigcup\tau_{n'}(P)$ is a partition.
By Theorem \ref{t:dual splitting}, (i) holds.

(ii).~ Let $\theta$ be a primitive $n$-th root of unity. Since
$n=q+1$ divides $q^2-1$, we see that $\theta\in\F_{q^2}$.
The subset $P$ is of course $\mu_{q^2}$- invariant,
By $\tilde C_P$ we denote the cyclic code over $\F_{q^2}$ of length $n$
(i.e., $\tilde C_P\subseteq \F_{q^2}[X]/\langle X^n-1\rangle$)
with check polynomial
$$f_P(X)=\prod_{i\in P}(X-\theta^i) \in \F_{q^2}[X],$$
hence $f_{\bar P}(X)=\prod_{h\in\bar P}(X-\theta^h)$
is a generator polynomial of $\tilde C_P$.
Of course, $\tilde C_P$ is iso-self-dual.
Denote $\ell=\frac{n-2}{4}$. We claim that
\begin{equation}\label{e:RS}
\tilde C_p\!=\!\big\{\big(a(1), \theta^{\ell}a(\theta^{-1}),
  \cdots,\theta^{(n-1)\ell}a(\theta^{-(n-1)})\big)
   \,\big|\,a(X)\in\F_{q^2}[X], \deg a(X)<n'\big\}
\end{equation}
where the right hand side is a generalized Reed-Solomon code
over $\F_{q^2}$ of length~$n$.

Note that $\{\theta^h\mid h\in{\bar P}\}$ is the defining set of
the cyclic code $\tilde C_P$.  %over~$\F_{q^2}$.
Any word $\big(a(1), \theta^{\ell}a(\theta^{-1}),
 \cdots, \theta^{(n-1)\ell}a(\theta^{-(n-1)})\big)$,
for $a(X)\in\F_{q^2}[X]$ with $\deg a(X)<n'$, corresponds to
the polynomial
$$C_a(X)=\sum_{i=0}^{n-1}\theta^{\ell i} a(\theta^{-i})X^i\in \F_{q^2}[X].$$
To proof Eqn \eqref{e:RS}, it is enough to show that,
for any $h\in\bar P$ and any
$a(X)\in\F_{q^2}[X]$ with $\deg a(X)<n'$,
the $\theta^h$ is a root of $C_a(X)$.
Write $a(X)=\sum_{j=0}^{n'-1}a_jX^j$. Then
\begin{equation*}
C_a(\theta^h)=\sum_{i=0}^{n-1} \theta^{\ell i} a(\theta^{-i})\theta^{h i}
=\sum_{i=0}^{n-1} \sum_{j=0}^{n'-1} a_j\theta^{-j i} \theta^{\ell i} \theta^{h i};
\end{equation*}
i.e.,
\begin{equation}\label{zero C_a}
C_a(\theta^h)=\sum_{j=0}^{n'-1} a_j\sum_{i=0}^{n-1}  \theta^{(h-(j-\ell)) i}.
\end{equation}
Recall that $\ell=\frac{n-2}{4}$. For $0\le j\le n'-1=\frac{n-2}{2}$, we get
$$
  -\frac{n-2}{4}\le j-\ell \le \frac{n-2}{2}-\frac{n-2}{4}=\frac{n-2}{4};
$$
in other words, $j-\ell\in P$. However, $h\in\bar P$.
So $j-\ell~{\not\equiv}~h\!\pmod n$, equivalently,
$h-(j-\ell)~{\not\equiv}~0\!\pmod n$.
Noting that $\theta$ is a primitive $n$-th root of unity, we see that
$\theta^{h-(j-\ell)}\neq 1$ and $\theta^n=1$. Thus,
$$
\sum_{i=0}^{n-1}  \theta^{(h-(j-\ell)) i}
=\frac{\theta^{(h-(j-\ell))n}-1}{\theta^{h-(j-\ell)}-1}=0.
$$
Return to Eqn \eqref{zero C_a}, we obtain that $C_a(\theta^h)=0$
for all $h\in\bar P$. The conclusion of Eqn \eqref{e:RS} is proved.

Finally, since the subset $P$ of $\Z_n$
is $\mu_q$-invariant, the polynomial $f_P(X)$ is in fact
an $\F_q$-polynomial. The cyclic code $C_P$ over $\F_q$ has $f_P(X)$
as a check polynomial, so $C_P$ consists of the codewords
of $\tilde C_P$ whose coefficients are all in~$\F_q$.
We are done.
\end{proof}

\medskip
The performance of the iso-self-dual MDS cyclic codes %constructed
in Theorem~\ref{t:MDS} would be nice.
These codes have the following properties:
\begin{itemize}
\item
they have good algebraic structure: they are cyclic and iso-self-dual;
\item
they have good weight structure: they are MDS and formally self-dual;
\item
they can be systematically encoded by feed-back shift registers;
\item
they can be decoded by Berlekamp-Welch decoding algorithm.
\end{itemize}

The following is an instance of the class of codes
constructed in Theorem~\ref{t:MDS}.

\begin{example}\rm
Let $q=5$, $n=q+1=6$.
Then $q=5\equiv 1\!\pmod 4$, $\nu_2(n)=1$ and $n'=6/2=3$.
By Theorem \ref{t:MDS}, we take $t=n'=3$, and
$$P=\{0,1,-1=5\}, \mbox{ ~~ hence ~~ }
 \bar P=\Z_{10}\backslash P=\{2,3,4\} $$
and $\tau_{t}(P)=\bar P$.
Take $\theta$ to be a primitive $6$-th root of unity such that $\theta^3=-1$.
Then $C_P$ is an iso-self-dual MDS cyclic code over $\F_5$ of length $6$
with the check polynomial $f_P(X)$
and the generator polynomial $f_{\bar P}(X)$
as follows (note that $\theta^3=-1$):
\begin{eqnarray*}
f_P(X)&=&(X-\theta^0)(X-\theta)(X-\theta^5)=(X-1)(X^2-X+1)\\
 &=&X^3-2X^2+2X-1,\\
f_{\bar P}(X)&=&(X-\theta^3)(X-\theta^2)(X-\theta^4)=(X+1)(X^2+X+1)\\
 &=&X^3+2X^2+2X+1.
\end{eqnarray*}
And $C_P^\bot=\T_{-1,t}(C_P)$ has a check polynomial
$f_{-\bar P}(X)=X^3+2X^2+2X+1$.
By Theorem \ref{t:MDS}(ii) and Eqn \eqref{e:RS},
$C_P$ is the alternant code by restricting the following
generalized Reed-Solomon code over $\F_{25}$ of length $6$:
$$
\big\{\big(a(1), \theta a(\theta^{-1}),
  \cdots,\theta^{5}a(\theta^{-5})\big)
   \;\big|\;a(X)\in\F_{25}[X],\; \deg a(X)<3\big\}.
$$

\end{example}

\section*{Acknowledgements}
The research of the authors is supported
by NSFC with grant number 11271005.
%It is the authors' pleasure to thank the anonymous referees for their helpful
%comments which led to improvements of the paper.

\end{document}